\newtheorem{theorem}{Theorem}
\newtheorem{lemma}{Lemma}
\newcommand{\mypara}[1]{\noindent\textbf{#1}}
\newcommand{\mysubpara}[1]{\noindent\emph{#1}}
\definecolor{mygray}{rgb}{0.92,0.92,0.92}
\newcommand{\codeinline}[1]{\colorbox{mygray}{\lstinline|#1|}}
\title{\textsf{VertiMRF}: Differentially Private Vertical Federated Data Synthesis}
\author{Fangyuan Zhao}
\affiliation{%
  \institution{Xi'an Jiaotong University}
}
\email{zfy1454236335@stu.xjtu.edu.cn}
\author{Zitao Li}
\affiliation{%
  \institution{Alibaba Group}
}
\email{zitao.l@alibaba-inc.com}
\author{Xuebin Ren}
\affiliation{%
  \institution{Xi'an Jiaotong University}
}
\email{xuebinren@mail.xjtu.edu.cn}
\author{Bolin Ding}
\affiliation{%
  \institution{Alibaba Group}
}
\email{bolin.ding@alibaba-inc.com}
\author{Shusen Yang}
\affiliation{%
  \institution{Xi'an Jiaotong University}
}
\email{shusenyang@mail.xjtu.edu.cn}
\author{Yaliang Li}
\affiliation{%
  \institution{Alibaba Group}
}
\email{yaliang.li@alibaba-inc.com}
\begin{abstract}
Data synthesis is a promising solution to share data for various downstream analytic tasks without exposing raw data. However, without a theoretical privacy guarantee, a synthetic dataset would still leak some sensitive information. Differential privacy is thus widely adopted to safeguard data synthesis by strictly limiting the released information.
This technique is advantageous yet presents significant challenges in the vertical federated setting, where data attributes are distributed among different data parties. 
The main challenge lies in maintaining privacy while efficiently and precisely reconstructing the correlation among cross-party attributes.
In this paper, we propose a novel algorithm called \textsf{VertiMRF}, designed explicitly for generating synthetic data in the vertical setting and providing differential privacy protection for all information shared from data parties. 
We introduce techniques based on the Flajolet-Martin sketch (or frequency oracle) for encoding local data satisfying differential privacy and estimating cross-party marginals.
We provide theoretical privacy and utility proof for encoding in this multi-attribute data.
Collecting the locally generated private Markov Random Field (MRF) and the sketches, a central server can reconstruct a global MRF, maintaining the most useful information.
Additionally, we introduce two techniques tailored for datasets with large attribute domain sizes, namely dimension reduction and consistency enforcement.
These two techniques allow flexible and inconsistent binning strategies of local private MRF and the data sketching module, which can preserve information to the greatest extent.
We conduct extensive experiments on four real-world datasets to evaluate the effectiveness of \textsf{VertiMRF}. 
End-to-end comparisons demonstrate the superiority of \textsf{VertiMRF}, and ablation studies validate the effectiveness of each component.
    
\end{abstract}
\begin{document}
\sloppy

\maketitle

\section{Introduction}



With the increasing stringency of data privacy regulations such as the European General Data Protection Regulation (GDPR)~\cite{voigt2017eu} and the California Consumer Privacy Act~\cite{pardau2018california}, data privacy has become a significant concern for various data analysis tasks. 
Following this trend, data synthesis has emerged as a promising technique. 
For the tabular data domain, the synthesis algorithms aim to generate and release synthetic data that preserves the statistical characteristics of the original data, allowing for diverse data analysis tasks to be conducted without access to the original real data from individuals. 

Coupled with \textit{differential privacy} (DP)~\cite{dwork2006calibrating,zhang2018differentially,zhao2020latent,ren2024belt} techniques, the synthetic data can provide theoretical privacy guarantees for arbitrary individual records in the original datasets.
Compared with other DP algorithms for specific analytic tasks, DP data synthesis can support an unlimited number of unrestricted downstream tasks without additional privacy loss other than the one occurring during data synthesis~\cite{hu2023sok}.
The main challenge emerges when ensuring DP while generating synthetic data of high quality. A growing body of academic research~\cite{abay2019privacy,zhang2018differentially,frigerio2019differentially,chen2020gs,jordon2018pate, mckenna2019graphical,mckenna2021winning,zhang2017privbayes,cai2021data,zhang2021privsyn,ren2022ldp,wang2019locally} has focused on improving the trade-off between privacy and utility of DP synthetic data and already obtained promising results. However, these studies primarily focus on the centralized setting, assuming that the raw data has already been collected by a trusted curator.

To realize the value of data at the furthest level, multiple data parties may want to cooperate on some tasks for more comprehensive and accurate information.
If such cooperation is achieved without sharing data directly, the setting is generally called federated learning (FL)~\cite{kairouz2021advances,xie2023federatedscope,li2024performance}.
A relatively well-studied scenario in FL is that data parties have data with the same set of attributes but from different groups of individuals.
This scenario is called horizontal federated learning (HFL) because the local dataset can be obtained by splitting a virtual global dataset by individuals~\cite{mcmahan2018dp-rnn, mcmahan2017dpfedavg}. Under such a setting, several studies have focused on DP data synthesis under the horizontally-distributed~\cite{su2016differentially} and local DP settings~\cite{ren2018lopub,zhang2018calm}.
Nevertheless, another attractive but challenging case is when data parties have data from the same set of individuals but on different attributes~\cite{liu2020federated-forest,hu2019fdml, li2022vldb, li2023differentially}.
Symmetrically, this setting is called vertical federated learning (VFL) as local datasets can be derived by dividing a global dataset by attributes.
VFL techniques attract the attention of many medical or fintech companies~\cite{webank} because their model accuracy can be boosted by more comprehensive information brought by VFL.
In this paper, we focus on data synthesis in the VFL setting as it has great potential in various aspects. 

\mysubpara{1) It facilitates the cross-party data analysis.} Simply combining the synthetic data generated independently by the multiple parties loses the statistical property of cross-party attributes. However, when a VFL data synthesis algorithm that accurately captures the cross-party correlation is available, any downstream correlation analysis can be done efficiently once the synthetic data is ready. 
\mysubpara{2) It enable validating or tuning general VFL algorithms under controllable privacy risk.}
For example, VFL tasks often involve substantial costs for hyperparameter tuning among multi-parties, due to the strict limitations of cross-party data access. Releasing a synthetic dataset that preserves the statistical characteristics of the original data can help select optimal hyper-parameters before model training.

Despite the great potential, there are following \textit{challenges} that hinder the practical applications.

\mysubpara{C1: Information loss when estimating cross-party attribute correlations.}
Unlike algorithms in the central setting that can access all data attributes, VFL synthesis algorithms that can faithfully generate data in global-view must have components to estimate the correlation of the cross-party attributes, either explicitly or implicitly.
However, such estimation must suffer information loss because of either the distillation of raw data or added randomness for privacy.

\mysubpara{C2: Composing and trade-off the intra-party and cross-party information.}
It is known that statistics estimated in the central DP setting can have higher accuracy than the same ones obtained in the distributed DP settings. 
Although the intuitive idea following this is to utilize as much information as possible that does not rely on cross-party cooperation, how to effectively and efficiently combine and balance this information with estimated cross-party correlation information remains to be explored.

\mysubpara{C3: Curse of dimensionality.} In VFL settings, a record may contain multiple attributes that distributed among multiple parties, each attribute with large domain size. In this case, there are multiple cross-party attribute combinations to estimate, which would introduce overwhelming noises and huge communication costs. 




Although there are a few works on DP data synthesis under the vertical setting, they still have limitations related to the challenges above.
DistDiffGen~\cite{mohammed2013secure} is a two-party DP data synthesis framework. It falls short of handling C1 and C3 because it relies on a given taxonomy tree requiring strong prior knowledge and is tailored to classification tasks only. 
VertiGAN~\cite{jiang2023distributed} adapts the DP-WGAN approach to vertical setting~\cite{jiang2023distributed}.
However, the GAN-based models are proven to be not suitable for synthesizing tabular data with DP, which indicates that C1 and C2 still hinder its practical application. 
DPLT~\cite{tang2019differentially} utilizes a latent tree model to capture the correlation among cross-party attributions. 
However, its application is limited by C3 because it is designed for datasets with binary attributes and suffers from the huge communication and computation costs incurred by the complicated cryptography protocol.  

To handle the challenges, we propose \textsf{VertiMRF} for generating high-quality synthetic data with differential privacy guarantees in the VFL setting with multiple data parties and a semi-honest central server.
Our key observation is that the central DP data synthesis can achieve great performance in terms of privacy-utility trade-off, and the cross-party statistic estimation is necessary but may unavoidably be less accurate.
Thus, \textsf{VertiMRF} adapts, combines, and balances these two components.
\textsf{VertiMRF} adapts PrivMRF~\cite{cai2021data} to capture and share differentially private intra-party attribute statistic information.
We then design special protocols to let the data parties encode and the server decode the cross-party attribute correlation information.
With both intra-party and cross-party attribute correlation information, the server can reconstruct a global MRF for full-view data synthesis.
Our key contributions assembled in \textsf{VertiMRF} are summarized as follows:
\begin{itemize}[leftmargin = *,topsep=0pt,itemsep=1pt]
    \item We propose a communication efficient and differentially private vertical data synthesis framework \textsf{VertiMRF}. \textsf{VertiMRF} merges a sequence of strategies that allow an untrusted server to construct a global Markov Random Field by merging and balancing differential private encoded information.  
    \item We incorporate a novel Flajolet-Martin (FM) sketch based approach to estimating cross-party multi-attribute marginals. This approach is a key component of \textsf{VertiMRF} to estimate cross-party correlations with relatively low error while protecting privacy. Theoretical privacy guarantee and error analysis are provided. 
    \item We design two critical techniques into \textsf{VertiMRF} to prevent the noise of FM-sketch from obscuring the useful information of attributes with large domain sizes when building the global MRF, including a dimension reduction technique to tune the granularities of attributes while preserving the statistical information and a consistency enforcement technique to maintain the consistency among frequencies of different granularities.
    \item We conduct empirical validation on four real-world datasets. The end-to-end comparison results demonstrate the superiority of our approach to the baseline algorithms. Furthermore, the impact and effectiveness of each component of our approach are validated by ablation studies.
\end{itemize}

\section{Preliminaries}\label{sec:pre}


\subsection{Differential Privacy}\label{subsec:DP}
Differential privacy (DP) is a rigorous privacy notion that quantifies the privacy loss of algorithms by analyzing the statistical difference between the algorithm outputs on neighboring datasets differing on only one record.

\begin{definition}[Differential Privacy~\cite{dwork2006calibrating}]
A randomized mechanism $\mathcal{M}$ satisfies $(\epsilon,\delta)$-differential privacy if for any neighboring datasets $D$, $D^{\prime} \in \mathcal{D}$ that differ on only one record, their outputs fall in any $R\subset Range(\mathcal{M})$ with probability
$
    Pr[\mathcal{M}(D)\subseteq R] \leq \exp{(\epsilon)}   Pr[\mathcal{M}(D^{\prime})\subseteq R] + \delta .
$
\end{definition}

DP is a popular privacy notion because the privacy loss is composable.
Basically, with any two algorithms $f$ and $g$ which satisfy $(\epsilon_1, \delta_1)$-DP and $(\epsilon_2, \delta_2)$-DP respectively, the sequential use of $f\odot g$ on a dataset satisfies $(\epsilon_1+\epsilon_2, \delta_1+\delta_2)$-DP. 
However, such composition is not tight, which may incur huge algorithm utility loss with the overwhelming noises when a DP mechanism is applied repetitively.
To mitigate this issue, \`Renyi Differential Privacy (RDP) is proposed to account for more accurate sequential privacy losses.

\begin{definition}[R{\'e}nyi Differential Privacy~\cite{mironov2017renyi}]
A randomized mechanism $\mathcal{M}: D\rightarrow R$ is said to be $(\lambda, \epsilon)$-R{\'e}nyi Differential Privacy, if for any two neighboring datasets $D$, $D^{\prime}$, it holds that
\begin{align}
    D_\lambda\left(\mathcal{M}(D)\vert \mathcal{M}(D^{\prime})\right) \triangleq \frac{1}{\lambda}\log\mathbb{E}_{ R}\left(\frac{\mathcal{M}(D)\subseteq R}{\mathcal{M}(D^{\prime})\subseteq R}\right)^{\lambda} \leq \epsilon
\end{align}
\end{definition}

RDP can provide a tighter bound of DP when composing a large number of DP mechanisms.
\begin{lemma}[DP Composition based on RDP]\label{lem:com_rdp}
Let $f$ be the composition of $n$ mechanisms that satisfies $\epsilon$-DP, then for each $0<\delta<1$ with $\log{(1/\delta)}\geq n\epsilon^2$, $f$ satisfies $(4\epsilon\sqrt{2n\log{1/\delta}}, \delta)$-DP. 
\end{lemma}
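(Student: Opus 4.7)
The plan is to route the argument through R\'enyi differential privacy, as the lemma advertises, via three canonical moves: lift each of the $n$ constituent $\epsilon$-DP mechanisms into RDP, exploit the additive composition of RDP across these mechanisms, and then convert the aggregate RDP guarantee back into an $(\epsilon',\delta)$-DP statement while optimizing over the order $\lambda$.

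For the lifting step I would use the standard inequality that any $\epsilon$-DP mechanism satisfies $(\lambda,\tfrac{1}{2}\lambda\epsilon^2)$-RDP for every $\lambda>1$. This follows by bounding $D_\lambda(\mathcal{M}(D)\,\|\,\mathcal{M}(D'))$ directly: the pure-DP hypothesis gives $|\log(p(o)/q(o))|\le \epsilon$ for the densities of $\mathcal{M}(D)$ and $\mathcal{M}(D')$, and a Hoeffding-type bound on the moment generating function of the bounded zero-mean random variable $\log(p/q)-\mathbb{E}[\log(p/q)]$ closes out the estimate. The chain rule for R\'enyi divergence then shows that the composition $f$ of $n$ such mechanisms is $(\lambda,\tfrac{n}{2}\lambda\epsilon^2)$-RDP. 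Invoking the well-known RDP-to-DP conversion (any $(\lambda,\rho)$-RDP mechanism is $\bigl(\rho+\tfrac{\log(1/\delta)}{\lambda-1},\delta\bigr)$-DP) yields, for every $\lambda>1$,
\begin{equation*}
\epsilon'(\lambda)\;=\;\tfrac{n}{2}\lambda\epsilon^2\;+\;\frac{\log(1/\delta)}{\lambda-1}.
\end{equation*}

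I then optimize over $\lambda$. Differentiating and setting the derivative to zero gives $\lambda-1=\sqrt{2\log(1/\delta)/(n\epsilon^2)}$, which is legitimately positive exactly because the hypothesis $\log(1/\delta)\ge n\epsilon^2$ makes the radicand at least $2$. Substituting this optimum back yields $\epsilon' \le \tfrac{1}{2}n\epsilon^2 + \epsilon\sqrt{2n\log(1/\delta)}$, and invoking $n\epsilon^2\le \log(1/\delta)$ a second time lets me upper bound $\tfrac{1}{2}n\epsilon^2$ by $\tfrac{1}{2}\epsilon\sqrt{n\log(1/\delta)}$, so the whole expression is dominated by $4\epsilon\sqrt{2n\log(1/\delta)}$ with considerable slack. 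The main obstacle is purely bookkeeping: the advertised constant $4$ is loose, so the real care goes into (i) keeping the pure-DP-to-RDP step with the clean $\lambda$-linear bound $\tfrac{1}{2}\lambda\epsilon^2$ rather than some messier $\lambda$-dependent expression, and (ii) using the hypothesis $\log(1/\delta)\ge n\epsilon^2$ in two distinct places --- once to certify that the optimizer $\lambda^\ast$ lies in the admissible range $(1,\infty)$, and once to absorb the residual $\tfrac{1}{2}n\epsilon^2$ term into the dominant $\epsilon\sqrt{n\log(1/\delta)}$ term.
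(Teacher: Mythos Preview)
The paper does not actually supply a proof of this lemma: it is stated in Section~\ref{subsec:DP} as a background fact about RDP-based composition (implicitly sourced from~\cite{mironov2017renyi}) and then invoked, e.g., in the proof of Theorem~\ref{the:rr-error} and Theorem~\ref{the:fm_priv}. So there is no ``paper's own proof'' to compare against.

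That said, your derivation is correct and is exactly the standard route one takes to justify such a statement: the lift from pure $\epsilon$-DP to $(\lambda,\tfrac{1}{2}\lambda\epsilon^2)$-RDP is the Bun--Steinke bound (equivalently, $\epsilon$-DP $\Rightarrow$ $\tfrac{1}{2}\epsilon^2$-zCDP), the additive composition across $n$ mechanisms is immediate, and the conversion $(\lambda,\rho)$-RDP $\Rightarrow(\rho+\tfrac{\log(1/\delta)}{\lambda-1},\delta)$-DP is Mironov's. Your optimization over $\lambda$ and the two uses of the hypothesis $\log(1/\delta)\ge n\epsilon^2$ are handled cleanly, and your final bound of roughly $1.35\,\epsilon\sqrt{2n\log(1/\delta)}$ indeed sits well inside the stated $4\epsilon\sqrt{2n\log(1/\delta)}$. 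Nothing is missing.
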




\subsection{DP Flajolet-Martin Sketch}\label{subsec:fm}

Flajolet-Martin (FM) Sketch is a probabilistic data structure for multi-set cardinality estimation with DP guarantee. It is constructed by hashing each element in a multi-set to an integer by a hash function $\mathcal{H}$ with a key $\xi$.
The hashed integers are then independent geometric random variables with the parameter $\frac{\gamma}{1+\gamma}$ if $\xi$ is sampled from a large set uniformly. 
Note that, the duplicated elements in the multi-set are mapped to the same integer.
Thus, the cardinality $k$ can be estimated as $k=(1+\gamma)^\alpha$ where $\alpha$ denotes the maximum of the observed integer after hashing.
It has been proved in~\cite{smith2020flajolet} that $(1+\gamma)^\alpha\in \left[\frac{k}{(1+\gamma)}, k(1+\gamma)\right]$ with a reasonable probability.
The estimation can be improved by repeating the procedure~\cite{smith2020flajolet} multiple times with different hash functions and taking the $1/e$-th quantile of all the maxima as the final estimator. 

The FM sketch-based cardinality estimation is widely used due to its appealing property that the sketch structure is mergeable. 
That is, given two different multi-set $\mathcal{X}_1$ and $\mathcal{X}_2$ and their corresponding FM sketches $\alpha_1$ and $\alpha_2$, then the cardinality of their union $\mathcal{X}_1\cup\mathcal{X}_2$ can be simply estimated as $(1+\gamma)^{\max{(\alpha_1, \alpha_2)}}$.

Based on this and the inclusion-exclusion principle, i.e., $\mathcal{X}_1\cap\mathcal{X}_2=\overline{ \overline{\mathcal{X}_2}\cup \overline{\mathcal{X}_2}}$, the cardinality of the intersection of two multi-sets can also be estimated.

\mypara{\textbf{Differentially private FM-sketch.}}  
As mentioned, estimating the cardinality of a multi-set using FM-sketch involves mapping distinct elements to geometric random variables and selecting the maximum value. However, this process may violate the privacy constraint as it requires the access to the raw data and the maximum value may reveal statistical information about the set. Fortunately, recent studies~\cite{smith2020flajolet, dickens2022order} have demonstrated that FM-sketch can preserve DP under certain conditions. Specifically, if the multi-set contains at least $\frac{1}{e^{\epsilon}-1}$ distinct elements and the maximum of the geometric random variables is lower bounded by $\lceil\log_{1+\gamma}{\frac{1}{1-e^{-\epsilon}}}\rceil$, where $\gamma$ is parameter of the geometric distribution, then the process of selecting the maximum of these random variables ensures $\epsilon$-DP. The privacy guarantee is formalized in lemma~\ref{lemma:dp-fm}, and the DP FM-sketching algorithm is detailed in Algorithm~\ref{alg:dpfm}.

\begin{algorithm}[t]
\renewcommand{\algorithmicrequire}{\textbf{Input:}}
\renewcommand{\algorithmicensure}{\textbf{Output:}}
\caption{\textbf{DPFM}}
\label{alg:dpfm}
\begin{algorithmic}[1]
\REQUIRE Multi-set $\mathcal{X} = \{x_1, ...,x_n\}$, domain $[u]$, distribution parameter $\gamma$, privacy budget $\epsilon^{\prime}$, hash key $\xi \sim Uniform(R)$.
\ENSURE DP FM-sketch $\alpha$ for $\mathcal{X}$.
\STATE {$k_p \leftarrow \lceil \frac{1}{e^{\epsilon^{\prime}-1}} \rceil$, $\alpha_{min}\leftarrow \lceil\log_{1+\gamma}{\frac{1}{1-e^{-\epsilon^{\prime}}}}\rceil$}
\STATE {$\alpha_{p}\leftarrow \max\{Y_1, ..., Y_{k_p}\}$ where $Y_i\sim Geometric(\frac{\gamma}{1+\gamma}), \forall i\leq k_p$}
\STATE {$\alpha_{\mathcal{X}} \leftarrow \max\{\mathcal{H}_{\xi}(x_j)\}, \forall x_j\in \mathcal{X}$}.
\RETURN {$\max\left\{\alpha_{\mathcal{X}}, \alpha_{p}, \alpha_{min}\right\}$} 
\end{algorithmic}
\end{algorithm}

\begin{lemma}~\label{lemma:dp-fm}
Let $Y_1,\ldots ,Y_{k+1}$ be independent random variables where each $Y_i\sim geometric(\frac{\gamma}{1+\gamma})$. Let $W_1=\max{\{Y_1,\ldots,Y_{k},b\}}$ and $W_2=\max{\{Y_1,\ldots,Y_{k+1},b\}}$. For any $\epsilon$, if $k\geq\frac{1}{e^{\epsilon}-1}$ and $b\geq \lceil\log_{1+\gamma}{\frac{1}{1-e^{-\epsilon}}}\rceil$, then it holds that $|\log\frac{Pr[W_1 = \mathcal{O}]}{Pr[W_2 = \mathcal{O}]}|\leq \epsilon, \forall \mathcal{O}\in \mathbb{N}^{+}$.
\end{lemma}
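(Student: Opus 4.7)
My plan is to reduce the lemma to a pointwise bound on the ratio of the PMFs of $W_1$ and $W_2$ on their common support $\{b, b+1, \ldots\}$. Setting $q_t := (1+\gamma)^{-(t+1)} = \Pr[Y_i > t]$, the CDF of the maximum of $m$ independent geometric variables truncated from below at $b$ equals $(1-q_t)^m$ for $t\geq b$. So the PMFs take the form
\[
\Pr[W_j = b] = (1-q_b)^{k_j},\qquad \Pr[W_j = o] = (1-q_o)^{k_j}-(1-q_{o-1})^{k_j}\ \text{for } o>b,
\]
where $k_1=k$ and $k_2=k+1$, and both distributions vanish outside $\{b,b+1,\ldots\}$, so $\mathcal{O}<b$ contributes nothing.

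I would then split into the boundary case $o=b$ and the interior case $o>b$. The boundary case is immediate: the ratio $\Pr[W_1=b]/\Pr[W_2=b]=1/(1-q_b)$, and the hypothesis $b\geq \lceil\log_{1+\gamma}(1/(1-e^{-\epsilon}))\rceil$ implies $q_b\leq 1-e^{-\epsilon}$, giving $1/(1-q_b)\leq e^\epsilon$; the reverse direction is trivial since the ratio is at least $1\geq e^{-\epsilon}$.

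For the interior case, set $a:=1-q_o$ and $c:=1-q_{o-1}$, so $a>c>0$. Factoring
\[
a^{k+1}-c^{k+1}=(a-c)\,S_k,\quad a^k-c^k=(a-c)\,S_{k-1},\quad S_m:=\sum_{i=0}^{m}a^i c^{m-i},
\]
gives $\Pr[W_2=o]/\Pr[W_1=o]=S_k/S_{k-1}$. Applying the recursion $S_k=a\,S_{k-1}+c^k$ together with the pointwise lower bound $S_{k-1}\geq k\,c^{k-1}$ (each of the $k$ summands of $S_{k-1}$ is at least $c^{k-1}$, since $a\geq c$) yields
\[
a\;\leq\;\frac{S_k}{S_{k-1}}\;=\;a+\frac{c^k}{S_{k-1}}\;\leq\;a+\frac{1}{k}\;\leq\;1+\frac{1}{k}.
\]
The hypothesis $k\geq 1/(e^\epsilon-1)$ turns the upper bound into $S_k/S_{k-1}\leq e^\epsilon$, and the lower bound $a=1-q_o\geq 1-q_b\geq e^{-\epsilon}$ (using monotonicity of $q_t$ in $t$ and the hypothesis on $b$) gives $S_k/S_{k-1}\geq e^{-\epsilon}$. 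Hence $\Pr[W_1=o]/\Pr[W_2=o]$ and its reciprocal both lie in $[e^{-\epsilon},e^\epsilon]$.

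The main obstacle I anticipate is obtaining a tight enough handle on $S_k/S_{k-1}$ to separate the two hypotheses into their natural roles: the $b$-threshold controls the \emph{lower} ratio bound (by keeping $a$ close to $1$), whereas the $k$-threshold controls the \emph{upper} ratio bound (by forcing $c^k/S_{k-1}\leq 1/k$). The crude estimate $S_k/S_{k-1}\leq a+c$ is useless in the interesting regime, since it can approach $2$, so the decomposition $S_k=a\,S_{k-1}+c^k$ combined with the elementary observation $S_{k-1}\geq k c^{k-1}$ is the real content of the argument.
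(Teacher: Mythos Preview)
Your argument is correct. The paper does not supply its own proof of this lemma---it is quoted without proof from the prior FM-sketch DP literature (the references to Smith et al.\ and Dickens et al.)---so there is nothing in the paper to compare against; your decomposition into the boundary case $\mathcal{O}=b$ and the interior case $\mathcal{O}>b$, followed by the factorization $\Pr[W_2=o]/\Pr[W_1=o]=S_k/S_{k-1}$ and the bounds $a\le S_k/S_{k-1}=a+c^k/S_{k-1}\le 1+1/k$ via $S_{k-1}\ge kc^{k-1}$, is exactly the standard elementary route used in those sources.
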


\subsection{DP Data Synthesis}\label{subsec:privmrf}
Let $D$ be a set of data tuples $\{x^{(1)},\ldots,x^{(n)}\}$. Each tuple consists of values of a set of attributes $\mathcal{A} = \{A^1, \ldots,A^d\}$.
Each attribute $A^j, \forall j\in[d]$ has domain size $u_j$. 
Without loss of generality, we denote the domain of $A^j$ as $[u_j] \triangleq \{1, \ldots, u_j\}$.
With $M\subset \mathcal{A}$, $x_{M}^{(l)}$ denotes the values of tuple $x^{(l)}$ on an attribute set $M$.
Let $T_{M}$ be the counts of occurrences of all possible value tuples of attributes $M$ in $D$.
That is, $T_{M}$ is a vector of length $\prod_{A^j\in M} u_j$ and each element is defined as 
\begin{align}
    T_{M}[\mathbf{v}] =  \sum_{l\in[n]}\mathbb{I}(x_{M}^{(l)} = \mathbf{v}), \quad \forall \mathbf{v} \in \prod_{A^j\in M} [u_j].
\end{align}
$T_{M}$ is referred as the \textit{contingency histogram} of $M$. 

Data synthesis focuses on generating a dataset $\hat{D}$ given $D$ such that ideally $\forall M \subseteq \mathcal{A}, \hat{T}_M\approx T_M$.
A key challenge of DP data synthesis is to circumvent the curse of dimensionality incurred by a large $d$. 
With the increase of $d$, the error of $T_{\mathcal{A}}$ grows exponentially, as DP noise has to be added to each count of the contingency histogram.
To address this challenge, there have been works~\cite{mckenna2019graphical,mckenna2021winning,zhang2017privbayes,cai2021data,zhang2021privsyn} that propose to utilize low-way marginal distributions to approximate the high-way distribution without losing much correlations among the attributes.
Among these works, PrivMRF~\cite{cai2021data}, utilizing \textit{Markov Random Field} (MRF) to model the attribute correlations, shows the state-of-the-art performance. 

The basic idea of PrivMRF is to select an appropriate set of marginals to construct an MRF, which is then used to approximate the joint distribution of all attributes. 
In particular, PrivMRF consists of four phases:

\begin{itemize}[itemindent=2ex, listparindent=0ex, leftmargin=0ex, labelwidth=0px]
\item \textbf{Phases 1: Generate attribute graph.}
PrivMRF starts by generating an attribute graph $\mathcal{G}$ through greedily linking up each attribute pair $(A^i, A^j)$ in the descending order of noisy R-scores: 
\begin{equation}\label{equ:R-score}
R(A^i, A^j) = \frac{n}{2}\left\Vert Pr[A^i, A^j]-Pr[A^i]\cdot Pr[A^j]\right\Vert_1+\mathcal{N}(0, \sigma_R^2)
\end{equation}
After that, $\mathcal{G}$ is triangulated to ensure the domain size of the maximal clique not exceeding a threshold.

\item\textbf{Phases 2: Choose candidate marginal set.} PrivMRF samples a set of candidate marginals $\mathcal{U}$ from the cliques of triangulated $\mathcal{G}$ and ensure each marginal $M \in \mathcal{U}$ is $\theta$-useful. That is
$
    \frac{n}{\prod_{\footnotesize{A^i\in M}}u_{i}} \leq \theta\cdot g,
$
where $g$ denotes the expected absolute value of the noise to be injected into each count of $T_{M}$. $\theta$-usefulness ensures that the average count in $T_{M}$ is large enough to tolerate the noise.  

\item\textbf{Phases 3: Initialize the marginal set.} From $\mathcal{U}$, PrivMRF selects the most highly correlated marginal for each attribute to constitute an initialized marginal set $\mathcal{S}$, which is used to estimate the parameters $\mathbf{\Theta}$ of the MRF. $\mathbf{\Theta}$ is a real vector where each element corresponds to an entry in a contingency histogram $T_M, \forall M \in \mathcal{S}$. The MRF models the distribution of arbitrary tuple $x$ as:
\begin{align}\label{equ:MRF_distri}
    Pr[x] \propto \prod_{M\in \mathcal{S}} \exp\left(\mathbf{\Theta}_{M}[x_M]\right)
\end{align}
where $\mathbf{\Theta}_{M}$ denotes the sub-vector of $\mathbf{\Theta}$ corresponding to $M$, and $\mathbf{\Theta}_{M}[x_M]$ is the element corresponding to $x_M$. Based on the estimated $\mathbf{\Theta}$, any marginal $M^{\prime}$ can be inferred by the MRF as 
\begin{align}
Pr[y] = \sum_{x, x_{M^{\prime}} = y} Pr[x], \quad \forall y \in \prod_{A^j\in M^{\prime}}u_j.
\end{align}

\item\textbf{Phases 4: Refine the marginal set.} PrivMRF proceeds to refine the marginal set $\mathcal{S}$ by inserting marginals that cannot be accurately inferred by the MRF and iteratively refine the estimation of MRF.
\end{itemize}

\section{Differentially Private Vertical Data Synthesis}\label{sec:methods}
We provide the problem definition of DP data synthesis in the vertical setting and an overview of our solution in this section.


\subsection{Problem Definition}\label{subsec:problem_def}

We consider a system constituted by $m$ data parties and an untrusted central server orchestrating the overall process.
Each data party $\mathcal{P}_{i}, \forall i \in [m]$, possesses users' data $D_i = \{x^{(1)}_{\mathcal{A}_i}, \ldots, x^{(n)}_{\mathcal{A}_i} \}$ with a subset of attributes $\mathcal{A}_i\subset \mathcal{A}$.
We assume that the user's data has been aligned across these $m$ data parties by some record ID (e.g., social security number and phone number) with some private set intersection method~\cite{kolesnikov2016efficient,dong2013private,chen2017fast}.
That is, $x^{(l)}_{\mathcal{A}_i}$ and $x^{(l)}_{\mathcal{A}_j}$ are data tuples of a same individual $l$ but on different attributes.
The aligned data is a common setting with the vertical tasks~\cite{chen2021secureboost+, hardy2017private, liu2019communication,xie2022improving,zhang2020additively}.
Virtually speaking, there is a global dataset $D = (D_1 | \ldots | D_m)$ with attributes $\mathcal{A} = \cup_{i\in [m]} \mathcal{A}_i$ if all data parties' data can be merged.

\mypara{Adversary model.}
We consider both the adversaries within and outside the system. 
By the adversary within the system, we consider the central server to be honest but curious, which would execute the protocol honestly but try his best to infer the private information of the input dataset. 
However, we assume that none of the data parties is interested in colluding with the central server because privacy regulations prevent data parties from doing so. 
We consider the adversary outside the system as all the third-party data analysts who aim to infer some private information of the input dataset from the synthetic dataset and the intermediate results carried out in the communication between data parties and the server.

\mypara{Our goal.}
\textit{Our work aims to generate a collection of synthetic data $\hat{D}$ with attributes $\mathcal{A}$, which follows the data distribution as the virtual global dataset $D$ as closely as possible while protecting the privacy information. }
Specifically, we employ DP to ensure a high probability, controlled by the privacy budget parameter $\epsilon$, that no adversary can infer based on the synthesized dataset $\hat{D}$ whether any individual's data is used as the input of the data synthesis algorithm.

\subsection{Overview of Our Solution}\label{subsec:solution}

\begin{figure*}[htbp]
  \centering
  \begin{minipage}[t]{0.65\textwidth}
    \centering
    \includegraphics[width=\linewidth]{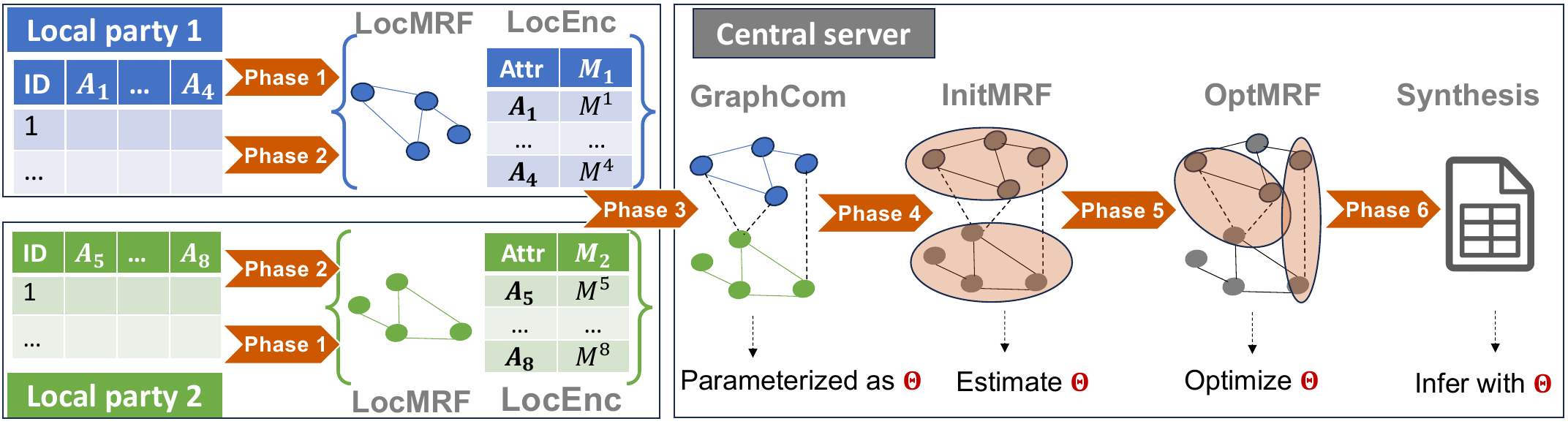}
    \vspace{-0.2in}
    \caption{Workflow of \textsf{VertiMRF}}
    \label{fig:workflow}
  \end{minipage}
  \hfill
  \begin{minipage}[t]{0.28\textwidth}
    \centering
    \includegraphics[width=\linewidth]{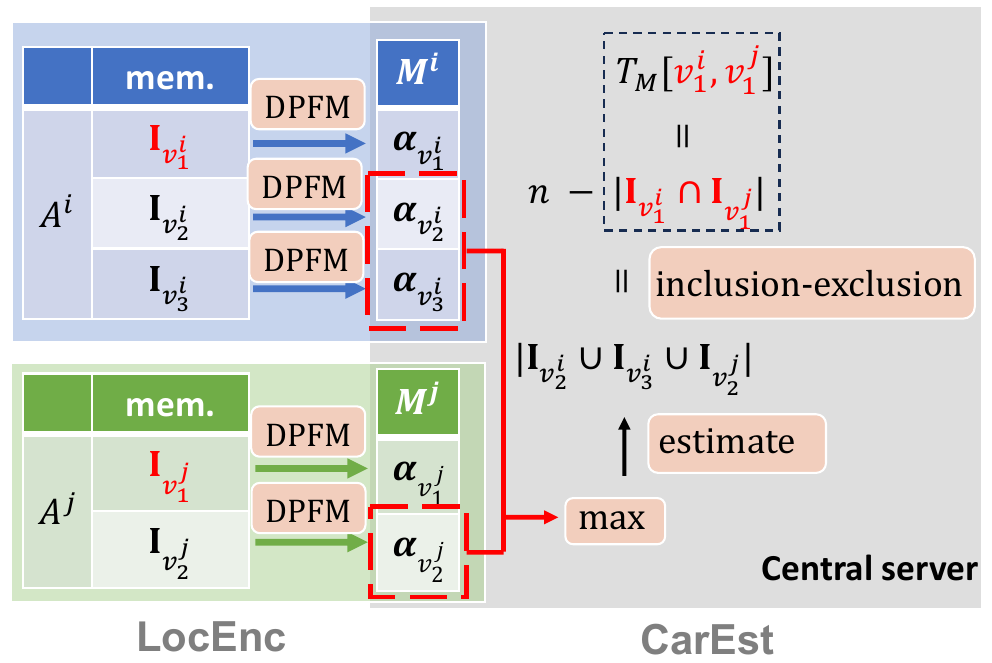}
    \vspace{-0.2in}
    \caption{\texttt{LocEnc} and \texttt{CarEst}. \label{fig:locenc}}
    \label{fig:membership}
  \end{minipage}
\end{figure*}

To address the problem defined above, we propose \textsf{VertiMRF}, a novel differentially private data synthesis approach. Figure~\ref{fig:workflow} and Algorithm~\ref{alg:overall_workflow} visualize the workflow of \textsf{VertiMRF}, which can be divided into the following six phases:

\begin{itemize}[itemindent=2ex, listparindent=0ex, leftmargin=0ex, labelwidth=0px]
\item\textbf{Phase 1:} Each party $\mathcal{P}_i$ constructs a local Markov Random Field $\texttt{MRF}_i$ to capture the correlation among local attributes $\mathcal{A}_i$. Besides,  $\mathcal{P}_i$ preserves the inner results, including the local attribute graph $\mathcal{G}_i$ and the marginal set $\mathcal{S}_i$ (sub-procedure \texttt{LocMRF}).

\item\textbf{Phase 2:} Each party $\mathcal{P}_i$ encodes local dataset with attributes $\mathcal{A}_i$ via differentially private FM sketch. Both the codes $\mathcal{M}_i$ 
and $\{\texttt{MRF}_i, \mathcal{G}_i, \mathcal{S}_i\}$ are sent to the central server (sub-procedure \texttt{LocEnc}).

\item\textbf{Phase 3:} The server generates a global attribute graph $\mathcal{G}$ by combining received disjoint local attribute graphs $\{\mathcal{G}_i\vert i \in [m]\}$. In the combining, server links up cross-party attribute pairs with higher R-scores estimated over the encoded attributes $\{M_i, i\in[m]\}$. The generated $\mathcal{G}$ is then triangulated (sub-procedure \texttt{GraphCom}).

\item\textbf{Phase 4:} The server initializes a marginal set $\mathcal{S}$ by taking the union of the received local marginal set. Based on $\mathcal{S}$, the parameter $\Theta$ of the global MRF is initialized with each contingency histogram $T_M, \forall M \in \mathcal{S}$ inferred from the received local \texttt{MRF}s (sub-procedure \texttt{InitMRF}). 

\item\textbf{Phase 5:} The server selects a set of cross-party marginals $\mathcal{S}^{c}$ from the cliques of $\mathcal{G}$. Based on the $\mathcal{S}^{c}$, $\Theta$ is further optimized. In the optimization, each contingency histogram $T_M, \forall M \in \mathcal{S}^{c}$ is estimated over the encoded attributes (sub-procedure \texttt{OptMRF}).

\item\textbf{Phase 6:} The server generates synthetic data by sampling from the data distribution approximated by the global MRF.
\end{itemize}

In what follows, we show the solution for \textbf{Phase 1-2} in Section~\ref{sec:mar_est} which describes the DP information sharing approaches of each local party. Then we describe \textbf{Phase 3-6} in Section~\ref{sec:server_mrf}, presenting how to use the shared DP information to construct a global MRF.




\begin{algorithm}[t]
\renewcommand{\algorithmicrequire}{\textbf{Input:}}
\renewcommand{\algorithmicensure}{\textbf{Output:}}
\caption{\textbf{\textsf{VertiMRF}}}
\label{alg:overall_workflow}
\begin{algorithmic}[1]
\REQUIRE The partitioned dataset $D = \{D_i, i \in [m]\}$, domain $([u_1]\times...\times [u_d])$, maximal clique size $\tau$, total privacy budget $(\epsilon, \delta)$ is divided as $\epsilon_0 = \frac{\epsilon}{2m}$, $\delta_0 = \frac{\delta}{2m}$, $\epsilon_1 = \frac{\epsilon}{2}$, $\delta_1 = \frac{\delta}{2}$.
\ENSURE Synthesized data $\hat{D}$.
\STATE {Each local party $\mathcal{P}_i$:\\
(a). constructs local \small{MRF}: $\{\texttt{MRF}_i,\mathcal{G}_i,\mathcal{S}_{i}\}\leftarrow\texttt{LocMRF}(D_i, \tau, \epsilon_0, \delta_0)$}.
\STATE {Each local party $\mathcal{P}_i$:\\
(a). encodes local attributes: $\mathcal{M}_i \leftarrow \texttt{LocEnc}(D_i,\mathcal{A}_i, \epsilon_1, \delta_1)$.\\
(b). sends $\mathcal{M}_i$ and $\{\texttt{MRF}_i,\mathcal{G}_i,\mathcal{S}_{i}\}$ to server.}
\STATE {Central server:\\
(a). generates global graph: \small{$\mathcal{G}\leftarrow$}\texttt{GraphCom}($\small{\{\mathcal{G}_i, \mathcal{M}_i\vert i \in [m]\}}$).}
\STATE {Central server:\\
(a). initializes marginal set: $\mathcal{S} \leftarrow \bigcup_{i=1}^{m}\{\mathcal{S}_{i}\}$.\\
(b). initializes parameter $\Theta$ of the global MRF based on $\mathcal{S}$}. 
\STATE {Central server:\\
(a). selects cross-party marginals $\mathcal{S}^c$ from triangulated $\mathcal{G}$.\\
(b). optimizes $\Theta$ based on $\mathcal{S}^c$.
}
\STATE {Central server:\\
(a). samples $\hat{D}$ based on the optimized global MRF.}
\end{algorithmic}
\end{algorithm}

\section{differentially private information sharing}\label{sec:mar_est}

Based on our security setting and DP's resistance to post-processing, the key to satisfying privacy protection is to ensure differential privacy guarantee for all the information shared from local parties, which are the outputs of \texttt{LocMRF} in \textbf{Phase 1} and \texttt{LocEnc} \textbf{Phase 2} (in brackets Figure~\ref{fig:workflow}).
Thus, we introduce the algorithms for \texttt{LocMRF} and  \texttt{LocEnc} (together with its closely paired \texttt{CarEst}), providing bases of the following synthesis steps.



\subsection{Local PrivMRF in Phase 1}\label{subsec:reuse_privmrf}

Each local party $\mathcal{P}_i$ directly applies the PrivMRF approach to construct $\texttt{MRF}_i$. As shown in Section~\ref{subsec:privmrf}, there would be inner results generated when constructing $\texttt{MRF}_i$, including the attribute graph $\mathcal{G}_i$ and the refined marginal set $\mathcal{S}_i$. Apart from $\texttt{MRF}_i$, both $\mathcal{G}_i$ and $\mathcal{S}_i$ should also be preserved and sent to the central server. 
Notably, because the maximal clique size for the global MRF is always limited to control the complexity of the attribute graph, the maximal clique size of each local MRF should also be limited. The maximal local clique size for each $\texttt{MRF}_i$ is set as $\tau^{\prime} \leq \frac{\tau}{m\cdot \bar{u}^2}$, with $\bar{u} = \frac{\sum_{j}u_j}{d}$ and $\tau$ is threshold of the clique size for global MRF. The constructed $\texttt{MRF}_i$ captures the correlations among the local attributes. 

\subsection{Frequency Oracle as a Baseline}

Frequency oracle (FO) protocols provide DP protection by randomizing each user's data and allowing the frequency estimation of values in the original domain. 
In this case, we use the widely known FO protocol, called the Generalized Random Response technique (GRR)~\cite{wang2019answering}, to implement a baseline for \texttt{LocEnc}.

\mypara{FO-based \texttt{LocEnc}.}
Each party $\mathcal{P}_i$ employs GRR to encode the local dataset. Specifically, for each local attribute $A^j \in \mathcal{A}_{i}$, a value $v_{(j)}$ is perturbed to an arbitrary $v_{(j)}^{\prime} \in [u_j]$ with probabilities:
\begin{equation}\label{equ:grr}
Pr\left[v_{(j)}^{\prime} = v\right] = \left\{
\begin{array}{rcl}
\frac{e^{\epsilon^{\prime}}}{e^{\epsilon^{\prime}}+u_{j}-1}, & v = v_{(j)}\\
\frac{1}{e^{\epsilon^{\prime}}+u_{j}-1}, & v\neq v_{(j)}\\
\end{array}
\right.
\end{equation}
Here, $\epsilon^{\prime}$ denotes LDP level preserved by the GRR technique for each attribute. 
After applying GRR to each user's data value in the local dataset $D_i$, we obtain a perturbed version $\widetilde{D}_i$. The partition of $\widetilde{D}_i$ restricted to $A^j$, denoted as $\mathcal{M}^j$, is taken as the encoded attribute of $A^j$. Subsequently, the encoded local attributes $\mathcal{M}_i = \{\mathcal{M}^j, \forall A^j\in \mathcal{A}_i \}$ are reported to the central server.

\mypara{FO-based \texttt{CarEst}.}
After receiving the reported encoded attributes $\mathcal{M}=\{\mathcal{M}^j\vert j\in [d]\}$, the central server can estimate the contingency histogram of any arbitrary marginal.
For an $l$-way marginal $M = (A^1,\ldots, A^l)$, we obtain a noisy contingency histogram $\hat{T}_{M}$ by counting the occurrences of each value tuple $\mathbf{v} = (v_{(1)},\ldots, v_{(l)})\in \prod_{i=1}^{l} [u_i]$ from $\mathcal{M}$. 
However, relying solely on this estimation can introduce considerable bias. To mitigate this issue, a commonly employed technique is to utilize a transition probability matrix $P$ to overcome the bias, which would then produce an unbiased estimate.

As shown in Equation~(\ref{equ:grr}), different attributes are encoded independently in \texttt{LocEnc} procedure. 
So each value tuple $\mathbf{v} = (v_{(1)},\ldots, v_{(l)})$ can be encoded as any arbitrary $\mathbf{v}^{\prime} = (v_{(1)}^{\prime}, \ldots, v_{(l)}^{\prime})$ with probability  
$Pr\left[\mathbf{v}\rightarrow \mathbf{v}^{\prime}\right] = \prod_{A^i\in M}Pr\left[v_{(i)}\rightarrow v_{(i)}^{\prime}\right].$
Since there are $\prod_{A^i\in M}u_i$ possible values for $\mathbf{v}$ in total, we can construct a $\left(\prod_{A^i\in M}u_i\right)\times \left(\prod_{A^i\in M}u_i\right)$-dimensional probability matrix $P$ to establish the transition relationship between $T_M$ and the noisy $\hat{T}_M$. That is $P\cdot T_M = \mathbb{E}[\hat{T}_M]$,
where the expectation accounts for the randomness of GRR. 
Therefore, $T_M$ can be estimated as $P^{-1}\cdot\hat{T}_M$, where the existence of $P^{-1}$ is guaranteed by the positive definite property of the matrix. Furthermore, it can be shown that $$\mathbb{E}[P^{-1}\cdot\hat{T}_M] = P^{-1}\cdot \mathbb{E}[\hat{T}_M]=P^{-1}\cdot P\cdot T_M = T_M.$$ This implies that $P^{-1}\cdot\hat{T}_M$ is an unbiased estimator of $T_M$.

\begin{theorem}[\textbf{Privacy $\&$ Error Analysis}]\label{the:rr-error}
Given a marginal $M=(A^{1},\ldots ,A^{l})$, if each attribute $A_i \in M$ is encoded with $\epsilon^{\prime}$-LDP following the rule shown in Equation~(\ref{equ:grr}), then the FO-based  $\texttt{LocEnc}$ preserves $\left(\min \left\{d\epsilon^{\prime}/2, 2\epsilon^{\prime}\sqrt{2d\log(1/\delta)}\right\},\delta\right)$-DP, $\forall \delta <1$, where $\delta = 0$ when the minimum taking $d\epsilon^{\prime}/2$. The FO-based $\texttt{CarEst}$ gives unbiased estimation for each count in $T_M$ with variance $O(n/{\epsilon^{\prime}}^{2l})$.
\end{theorem}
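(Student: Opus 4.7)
The plan is to split the proof into two parts, handling the privacy of \texttt{LocEnc} and the statistical error of \texttt{CarEst} separately, since the former is a property of the randomization alone while the latter depends on the linear-algebra debiasing step.

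For the privacy part, I would first verify that a single application of GRR on one attribute satisfies $\epsilon^{\prime}$-DP by directly computing the worst-case log-ratio of the output probabilities in Equation~(\ref{equ:grr}); this evaluates to $\log(e^{\epsilon^{\prime}}) = \epsilon^{\prime}$. Since each user's record comprises $d$ attributes that are perturbed with independent coins, replacing one user can change all $d$ coordinates, so the overall guarantee follows by composing $d$ mechanisms each being $\epsilon^{\prime}$-DP. The pure-DP branch $d\epsilon^{\prime}/2$ is obtained from the tight composition analysis for $\epsilon^{\prime}$-DP mechanisms (equivalently the RDP-to-DP conversion specialized to $\delta = 0$ and small $\epsilon^{\prime}$), while the approximate-DP branch $2\epsilon^{\prime}\sqrt{2d\log(1/\delta)}$ is exactly the advanced composition bound recalled in Lemma~\ref{lem:com_rdp}. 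Taking the minimum over the two regimes gives the stated guarantee, and post-processing absorbs any subsequent reporting of $\mathcal{M}_i$.

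For the utility part, unbiasedness of $P^{-1}\hat{T}_M$ is immediate from the argument already given in the text: independence of the per-attribute GRRs yields the factorized transition kernel $P = P_1 \otimes \cdots \otimes P_l$, hence $\mathbb{E}[\hat{T}_M] = P\cdot T_M$, and applying $P^{-1}$ on both sides gives $\mathbb{E}[P^{-1}\hat{T}_M] = T_M$ by linearity of expectation. For the variance bound, I would proceed as follows: (i) observe that each entry of $\hat{T}_M$ is a sum of $n$ Bernoulli indicators, so its marginal variance is $O(n)$, and the full covariance $\mathrm{Cov}(\hat{T}_M)$ has operator norm $O(n)$ because each record contributes to exactly one cell; (ii) exploit the Kronecker structure $P^{-1} = P_1^{-1}\otimes\cdots\otimes P_l^{-1}$, and compute the spectrum of the symmetric matrix $P_j = \frac{1}{e^{\epsilon^{\prime}}+u_j-1}\bigl((e^{\epsilon^{\prime}}-1)I + \mathbf{1}\mathbf{1}^{\top}\bigr)$; its smallest eigenvalue is $\Theta(\epsilon^{\prime})$ for small $\epsilon^{\prime}$, so $\Vert P_j^{-1}\Vert = \Theta(1/\epsilon^{\prime})$ and hence $\Vert P^{-1}\Vert = \Theta(1/{\epsilon^{\prime}}^{l})$ by multiplicativity of operator norms under Kronecker products; (iii) combine these ingredients via the standard identity $\mathrm{Var}\bigl((P^{-1}\hat{T}_M)[\mathbf{v}]\bigr) \leq \Vert P^{-1}\Vert^2 \cdot \Vert\mathrm{Cov}(\hat{T}_M)\Vert$, which yields the claimed $O(n/{\epsilon^{\prime}}^{2l})$.

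The main obstacle I expect is step (ii), namely obtaining a clean and tight bound on the magnification factor $\Vert P^{-1}\Vert$. The Kronecker structure reduces the question to a per-attribute eigenvalue calculation, but one must be careful about the regime of $\epsilon^{\prime}$ and about separating the contribution of the leading eigenvalue of $P_j$ (which is $\Theta(1)$) from the subleading ones (which are $\Theta(\epsilon^{\prime})$), since the small eigenvalues are what drive the $1/{\epsilon^{\prime}}^{l}$ blow-up. A secondary subtlety is verifying that the off-diagonal covariances of $\hat{T}_M$, which are nonpositive because of the multinomial-type dependence across cells, only help and never inflate the variance beyond the stated order; this confirms that replacing the full covariance by its diagonal gives a valid upper bound.
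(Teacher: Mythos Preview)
Your utility argument is actually more than the paper supplies: the paper simply cites Proposition~10 of \cite{wang2019answering} for the $O(n/{\epsilon^{\prime}}^{2l})$ variance, whereas you recover that rate directly via the Kronecker eigenvalue computation on the per-attribute transition matrices. That part is sound and self-contained.

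The privacy part, however, has a genuine gap. You show that a single GRR is $\epsilon^{\prime}$-DP and then assert that the pure-DP branch $d\epsilon^{\prime}/2$ follows from ``tight composition'' or ``RDP-to-DP conversion at $\delta=0$.'' Neither device saves a factor of two: composing $d$ mechanisms each $\epsilon^{\prime}$-DP yields exactly $d\epsilon^{\prime}$-DP under pure DP, and the RDP route at $\delta=0$ cannot beat basic composition. For the same reason, plugging per-mechanism budget $\epsilon^{\prime}$ into Lemma~\ref{lem:com_rdp} gives $4\epsilon^{\prime}\sqrt{2d\log(1/\delta)}$, twice the stated approximate-DP bound. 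The paper obtains the missing factor of two by a different route altogether: it argues that GRR with parameter $\epsilon^{\prime}$ is $\epsilon^{\prime}$-DP only under the \emph{replacement} neighboring relation, and that under the add/remove neighboring relation adopted in the paper each perturbation is $\epsilon^{\prime}/2$-DP; it then composes $d$ copies of this $\epsilon^{\prime}/2$ guarantee, which yields both the $d\epsilon^{\prime}/2$ pure branch and the $2\epsilon^{\prime}\sqrt{2d\log(1/\delta)}$ approximate branch via Lemma~\ref{lem:com_rdp}. Your proposal never touches the choice of neighboring relation, so as written it proves $\bigl(\min\{d\epsilon^{\prime}, 4\epsilon^{\prime}\sqrt{2d\log(1/\delta)}\},\delta\bigr)$-DP, off by a factor of two in both branches from the statement you are trying to establish.
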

\begin{proof}
The details are shown in Appendix~\ref{proof:the3}.
\end{proof}

\subsection{Sketch-based \texttt{LocEnc} and \texttt{CarEst}}

As explained in Section~\ref{subsec:fm}, FM sketch can be used to estimate the cardinality of a multi-set. And the estimation process can easily satisfy DP by incorporating phantom elements and bounding the maximum value of the hashed geometric variables.
Building on this idea, we design our sketch-based \texttt{LocEnc} and \texttt{CarEst}.  Figure~\ref{fig:locenc} visualizes the rationale of both sketch-based \texttt{LocEnc} and \texttt{CarEst}.

Note that our sketch-based estimation is inspired by \cite{li2023differentially} that utilizes DP FM sketch to encode membership information of local clusters and estimate the cardinality of cross-party clusters' intersection.
However, only one attribute is shared by a data party in the clustering task (i.e., which cluster an individual is clustered to), while each data party possesses data with multiple attributes in data synthesis.
If we want the server to estimate any cross-party attribute combination with one communication round, using different hash keys for each combination is infeasible as each hash key introduces additional privacy loss.
We solve the challenge by answering the following questions: whether encoding multi-dimensional memberships with-in the same party with the same key still provides privacy protection and whether this approach can provide satisfying privacy-utility trade-offs.


\mypara{Sketch-based \texttt{LocEnc}.}
Each data party $\mathcal{P}_i$ encodes the membership information of local attribute $A^j$ using an FM sketch. This information, denoted as $\left\{\mathbb{I}_{v_1^j},\ldots, \mathbb{I}_{v_{u_j}^j}\right\}$, consists of $u_j$ ID sets. Each set $\mathbb{I}_{v_i^j}$ contains the IDs of records $x$ with $x_{(A^j)} = v_i^j$. An example illustrating the membership information of attributes is provided in \textit{Example 1}. The sketch-based \texttt{LocEnc} involves two main procedures: the generation of hash keys and the generation of sketches.

\noindent \textit{Example 1: let $D$ be a dataset containing records with attributes \underline{Gender}, \underline{Age} and \underline{Hobby} as in the following table. Then, for attribute \underline{Gender}, the membership information 
is $\{\mathbb{I}_{male}, \mathbb{I}_{female}\}$, where $\mathbb{I}_{male} = \{1\}$ and $\mathbb{I}_{female} = \{2,3\}$.}

\begin{table}[!h]
\centering
\resizebox{0.45\columnwidth}{!}{%
\vspace{-0.35cm}
\begin{tabular}{@{}cccc@{}}
\toprule
Index & Gender & Age  & Hobby \\ \midrule
1    & male    & 20-30  & cook          \\ \midrule
2    & female   & 20-30   & basketball      \\ \midrule
3   & female   & 10-20  & cook          \\ \bottomrule
\end{tabular}
}
\vspace{-0.3cm}
\end{table}

Due to the privacy concern, the hash keys should be collaboratively generated by the data parties and kept unknown to the central server. There are multiple secure multi-party computation (SMC) protocols can be applied to achieve this, such as the Diffie-Hellman protocol~\cite{krawczyk2005hmqv}, which allows multi parties to negotiate a random number securely even if the central server is semi-honest~\cite{krawczyk2005hmqv}.

Next, each party encodes the membership information of local attributes using \texttt{DPFM} (Algorithm~\ref{alg:dpfm}) algorithm with the generated hash key $\xi$. Specifically, for the membership information $\{\mathbb{I}_{v_1^j},\ldots, \mathbb{I}_{v_{u_j}^j}\}$ of attribute $A^j\in \mathcal{A}_i$, party $\mathcal{P}_i$ applies the \texttt{DPFM} algorithm  to each $\mathbb{I}_{v_i^j}$ with a given privacy budget $\epsilon^{\prime}$. This generates a DP FM sketch tuple $(\alpha_{v_1^j},\ldots, \alpha_{v_{u_j}^j})$ for $A^j\in\mathcal{A}_i$. Considering all local attributes, party $\mathcal{P}_i$ composes a tuple set $\{(\alpha_{v_1^j},\ldots, \alpha_{v_{u_j}^j})\vert A^j \in \mathcal{A}_i\}$. To enhance utility, this process is repeated $t$ times, and party $\mathcal{P}_i$ sends $t$ tuple sets $\left\{\mathcal{M}_{i}^{(h)} \triangleq\left\{\left(\alpha_{v_1^j}^{(h)},\ldots,\alpha_{v_{u_j}^j}^{(h)}\right)\middle|\ A^j\in \mathcal{A}_i\right\}\middle|\ h\in [t]\right\}$ to the central server. The details of the sketch-based \texttt{LocEnc} method are presented in Algorithm~\ref{alg:fm_LocEnc}.

\begin{algorithm}[t]
\renewcommand{\algorithmicrequire}{\textbf{Input:}}
\renewcommand{\algorithmicensure}{\textbf{Output:}}
\caption{\textbf{Sketch-based \texttt{LocEnc}}}
\label{alg:fm_LocEnc}
\begin{algorithmic}[1]
\REQUIRE $\mathcal{P}_i$'s local dataset $D_i$, attribute $A^j\in \mathcal{A}_i$, distribution parameter $\gamma$, privacy budget $(\epsilon, \delta)$, domain $\left([u_1]\times\ldots\times [u_l]\right)$.
\ENSURE $\mathcal{P}_i$'s sketch set $\mathcal{M}_{i}$.
\STATE {Data parties generate $t$ hash keys $\{\xi_1,\ldots, \xi_t\}$ collaboratively.}
\STATE {$\epsilon^{\prime} = \frac{\epsilon}{4\sqrt{td\log(1/\delta)}}$}
\FOR{$h \in [t]$}
\FOR{each attribute $A^j\in \mathcal{A}_i$}
\FOR{$l \in [u_j]$}
\STATE {$\alpha_{v_{l}^{j}}^{(h)} = \texttt{DPFM}(D_i, u_j, \gamma, \epsilon^{\prime}, \xi_h)$}
\ENDFOR
\ENDFOR
\STATE{$\mathcal{M}_i^{(h)} \leftarrow \left\{\left(\alpha_{v_1^j}^{(h)},\ldots,\alpha_{v_{u_j}^j}^{(h)}\right)\vert A^j\in \mathcal{A}_i\right\}$}
\ENDFOR
\RETURN{$\{\mathcal{M}_i^{(h)}\vert h \in [t]\}$}
\end{algorithmic}
\end{algorithm}

\mypara{Sketch-based \texttt{CarEst}.}
As mentioned in Section~\ref{subsec:fm}, the FM sketch enables us to estimate the cardinality of the intersection of multiple sets using the inclusion-exclusion principle. This property can be extended to the DP FM sketch, allowing the central server to estimate the contingency histogram of a marginal. The details of this estimation process are presented in Algorithm~\ref{alg:fm_CarEst}.

After receiving all the sketches from data parties, the central server aggregates them into $t$ sets of sketch tuples $\left\{\mathcal{M}^{(h)}\triangleq\left\{\left(\alpha^{(h)}_{v_1^j},\ldots, \alpha^{(h)}_{v_{u_j}^j}\right)\vert j\in [d]\right\}\vert h \in [t]\right\}$. For each $l$-way marginal $M = (A^1,\ldots, A^l)$, the estimation of the contingency histogram $T_M$ involves estimating the cardinality of the intersection set $\bigcap_{i=1}^{l}{\mathbb{I}_{v_{(i)}}}$ for each $(v_{(1)},\ldots, v_{(l)})\in \prod_{i=1}^{l} [u_i]$. 
Here, $\mathbb{I}_{v_{(i)}}$ represents the membership information of attribute $A^i$ with value $v_{(i)}$. 
Using the inclusion-exclusion principle (i.e., $\bigcap_{i=1}^{l}{\mathbb{I}_{v_{(i)}}}=\overline{\bigcup_{i=1}^{l}{\overline{\mathbb{I}_{v_{(i)}}}}}$), 
the cardinality of $\bigcap_{i=1}^{l}{\mathbb{I}_{v_{(i)}}}$ can be determined by calculating the cardinality of $\overline{\bigcup_{i=1}^{l}{\overline{\mathbb{I}_{v_{(i)}}}}}$, where $\overline{\mathcal{X}}$ denotes the complementary set of $\mathcal{X}$. 
Thus, estimating the cardinality of an intersection is transformed into estimating the cardinality of the complementary set of a union. The basic approach to estimate $\left\vert\overline{\bigcup_{i=1}^{l}{\overline{\mathbb{I}_{v_{(i)}}}}}\right\vert$ is as follows: first, estimate $\left\vert\bigcup_{i=1}^{l}{\overline{\mathbb{I}_{v_{(i)}}}}\right\vert$ using the mergeable property of sketches, and then subtract this estimate from a DP sanitized data number $\hat{n}$.

For each $\mathcal{M}^{(h)}$ among all $t$ sketch sets, the sketch of $\overline{\mathbb{I}_{v_{(i)}}}$ can be estimated by $\max\left\{\alpha^{(h)}_{v^i_j}\vert j \in [u_i], v^i_l\neq v_{(i)}\right\}$. Here, $\alpha^{(h)}_{v^i_j}$ represents the sketch corresponding to attribute $A^i$ with value $v^i_j$. Furthermore, the sketch of $\bigcup_{i=1}^{j}{\overline{\mathbb{I}_{v_{(i)}}}}$ can be estimated by $\max\left\{\max\left\{\alpha^{(h)}_{v^i_j}\vert j \in [u_i], v^i_j\neq v_{(i)}\right\}\vert A^i \in M\right\}.$ After obtaining $t$ estimates of the sketch of $\bigcup_{i=1}^{j}{\overline{\mathbb{I}_{v_{(i)}}}}$, a more stable and accurate estimate $\alpha$ can be obtained by taking the harmonic mean. Furthermore, since the above sketch estimation process involves $\max$ operations on $\sum_{i=1}^{l}{\left(u_i-1\right)}$ sketches, each of which introduces $k_p$ phantom elements as shown in Algorithm~\ref{alg:dpfm}, there should be $\left(\sum_{i=1}^{l}{\left(u_i-1\right)}\right)\cdot k_p$ phantom elements taken into account in total.  By subtracting those phantom elements, $\left\vert\bigcup_{i=1}^{l}{\overline{\mathbb{I}_{v_{(i)}}}}\right\vert$ can be estimated by $(1+\gamma)^{\alpha}-\left(\sum_{i=1}^{l}{\left(u_i-1\right)}\right)\cdot k_p$. 
Finally, the cardinality of $\bigcap_{i=1}^{l}{\mathbb{I}_{v_{(i)}}}$ can be obtained by subtracting the estimated $\left\vert\bigcup_{i=1}^{l}{\overline{\mathbb{I}_{v_{(i)}}}}\right\vert$ from a DP sanitized data number $\hat{n}$ and ensuring the non-negativity.

Notice that \texttt{CarEst} operates on sketches that are generated with privacy guarantees. Therefore, \texttt{CarEst} does not consume additional privacy budget due to the post-process property of DP.

\begin{algorithm}[t]
\renewcommand{\algorithmicrequire}{\textbf{Input:}}
\renewcommand{\algorithmicensure}{\textbf{Output:}}
\caption{\textbf{Sketch-based \texttt{CarEst}}}
\label{alg:fm_CarEst}
\begin{algorithmic}[1]
\REQUIRE Marginal $M = (A^1,\ldots, A^l)$, domain $\left([u_1]\times\ldots\times [u_l]\right)$, sketch sets $\left\{\mathcal{M}^{(h)}\triangleq\left\{\left(\alpha^{(h)}_{v_1^j},\ldots, \alpha^{(h)}_{v_{u_j}^j}\right)\vert j\in [d]\right\}\vert h \in [t]\right\}$, privacy budget $(\epsilon, \delta)$, distribution parameter $\gamma$, noisy data number $\hat{n}$.
\ENSURE Estimated contingency histogram $T_M$.
\STATE {$\mathcal{T} \leftarrow \mathbf{0}^{t\times(u_1\times\ldots\times u_l)}, T_M \leftarrow \mathbf{0}^{(u_1\times\ldots\times u_l)}$}
\STATE {$\epsilon^{\prime} = \frac{\epsilon}{4\sqrt{td\log(1/\delta)}}$, $k_p = \lceil \frac{1}{e^{\epsilon^{\prime}-1}} \rceil$}
\FORALL{$h \in [t]$, $\left(v_{(1)},\ldots, v_{(l)}\right)\in \left([u_1]\times\ldots\times [u_l]\right)$}
\STATE{\footnotesize{$\mathcal{T}[h, (v_{(1)},\ldots, v_{(l)})] = \max\left\{\max\left\{\alpha^{(h)}_{v^i_l}\vert l \in [u_i], v^i_l\neq v_{(i)}\right\}\vert A^i \in M\right\}$}}
\ENDFOR
\FORALL {$(v_{(1)},\ldots, v_{(l)})\in \left(u_1\times\ldots\times u_l\right)$}
\STATE {$\alpha = \texttt{HarmonicMean}\left(\mathcal{T}[:, (v_{(1)},\ldots, v_{(l)})]\right)$}
\STATE {$T_M[(v_{(1)},\ldots, v_{(l)})] = (1+\gamma)^{\alpha} - \sum_{i=1}^{l}{\left(u_i-1\right)}\cdot k_p$}
\STATE {$T_M[(v_{(1)},\ldots, v_{(l)})] = \max\left\{\hat{n} - T_M[(v_{(1)},\ldots, v_{(l)})], 0\right\}$}
\ENDFOR
\RETURN {$T_M$}
\end{algorithmic}
\end{algorithm}

\begin{theorem}[\textbf{Privacy Analysis}]\label{the:fm_priv}
Suppose the FM sketch $\alpha_{v^i_j}^{(h)}$ for value $v^i_{j}$ of attribute $A^i$ is generated with $\epsilon^{\prime}$-DP in the $h$-th run. Then, the sketch-based \texttt{LocEnc} method in Algorithm~\ref{alg:fm_LocEnc} guarantees $(4\epsilon^{\prime}\sqrt{td\log(1/\delta)}, \delta)$-DP for all $\delta<1$.
\end{theorem}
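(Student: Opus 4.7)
The plan is to reduce the claim to an application of the advanced composition bound stated in Lemma~\ref{lem:com_rdp}, after carefully bookkeeping exactly how many $\epsilon'$-DP sketches a single record can influence. First I would invoke Lemma~\ref{lemma:dp-fm} to justify the premise: each call to \texttt{DPFM} inside Algorithm~\ref{alg:fm_LocEnc} pads the multi-set with $k_p = \lceil 1/(e^{\epsilon'}-1)\rceil$ phantom geometric variables and clips the released maximum from below by $\alpha_{\min}$, so the output $\alpha_{v^i_j}^{(h)}$ is $\epsilon'$-DP with respect to inserting or deleting one element of $\mathbb{I}_{v^i_j}$.

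The counting step is the crux. Fix neighboring local datasets $D_i$ and $D_i'$ differing in a single record $x$. For each attribute $A^j\in\mathcal{A}_i$, the record $x$ realizes exactly one value $v^j$, hence belongs to exactly one of the $u_j$ membership sets $\{\mathbb{I}_{v_1^j},\ldots,\mathbb{I}_{v_{u_j}^j}\}$. The remaining $u_j-1$ sketches for attribute $A^j$ are computed on identical inputs under $D_i$ and $D_i'$ and thus contribute zero privacy loss. Consequently, in a single hash-key run $h$, at most $|\mathcal{A}_i|\le d$ of the released sketches have distributions that actually depend on the differing record, and each such sketch is $\epsilon'$-DP. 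Because the $t$ hash keys $\xi_1,\ldots,\xi_t$ are drawn independently via the SMC protocol, the $t$ runs compose as independent mechanisms, giving at most $td$ nontrivial $\epsilon'$-DP releases in total.

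Substituting $n=td$ into Lemma~\ref{lem:com_rdp} yields the desired $(4\epsilon'\sqrt{td\log(1/\delta)},\delta)$-DP guarantee (up to the $\sqrt{2}$ constant factor absorbed by using the RDP-based composition tightly rather than the loose generic bound). Post-processing then extends the guarantee to the full returned object $\{\mathcal{M}_i^{(h)}\mid h\in[t]\}$, since $\mathcal{M}_i^{(h)}$ is just a tuple collection of sketches already accounted for. The main obstacle I expect is the per-run counting: one has to be explicit about the add/remove neighboring semantics, because under replace-semantics a record change could touch two sketches per attribute (the old and new value), which would inflate the effective $n$ to $2td$; the theorem as stated implicitly uses the add/remove convention, and this must be flagged to match the constant $4$ exactly. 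Everything else is a routine invocation of previously stated lemmas.
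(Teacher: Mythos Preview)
Your counting is right, but the composition step has a genuine gap. Within a single run $h$, all $d$ affected sketches are computed with the \emph{same} hash key $\xi_h$, so the differing record $x$ contributes the \emph{same} hashed value $\mathcal{H}_{\xi_h}(id)$ to each of the $d$ membership sets it touches. The phantom variables are fresh per call, but the data-side randomness is not: if $\mathcal{H}_{\xi_h}(id)$ happens to be large, it simultaneously inflates all $d$ sketches under $D$ but none under $D'$. This is shared internal randomness across mechanisms, and neither basic sequential composition nor Lemma~\ref{lem:com_rdp} applies in that regime (both require fresh coins, or at most adaptive dependence on prior \emph{outputs}). Your sentence ``the $t$ runs compose as independent mechanisms'' is true across runs but silently assumes the $d$ sketches \emph{within} a run also compose, which is exactly the step that needs justification.

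The paper confronts this directly: it writes the R{\'e}nyi divergence of the full output $\mathcal{M}^{(h)}$ for one run as a sum over the joint law of the $d$ affected sketches, expands via the chain rule into conditional factors $Pr[\alpha_{v_{id}^i}^{(h)}\mid \{\alpha_{v_{id}^t}^{(h)},t<i\}]$, and then argues by cases. When the conditioning event forces $\alpha_{v_{id}^i}^{(h)}\ge \mathcal{H}_{\xi_h}(id)$ (because some earlier sketch already equals the current one), the conditional laws under $D$ and $D'$ coincide and that factor drops out; otherwise the factor is genuinely independent and Lemma~\ref{lemma:dp-fm} bounds it. The upshot is a per-run RDP bound of $(\lambda,2d\lambda(\epsilon')^2)$, after which the $t$ independent runs compose and convert to the stated $(\epsilon,\delta)$ bound. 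Note also that this direct RDP route is what produces the constant $4\sqrt{td\log(1/\delta)}$ exactly; invoking Lemma~\ref{lem:com_rdp} with $n=td$ would give $4\sqrt{2td\log(1/\delta)}$, so your parenthetical about absorbing the $\sqrt{2}$ is not innocuous---it reflects a different (sharper) conversion that the paper carries out explicitly.
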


\begin{theorem}[\textbf{Error Analysis}]\label{the:fm_error}
Let $M = \{A^1, \ldots, A^l\}$ be an $l$-way marginal. Suppose $\hat{T}_M$ is the contingency histogram of $M$ estimated using Algorithm~\ref{alg:fm_CarEst} with privacy parameter $(\epsilon, \delta)$ and distribution parameter $\gamma$. For each $\mathbf{v} \in \prod_{i=1}^{l}[u_i]$, the following inequality holds:
\begin{align}
\frac{\vert \hat{T}_M[\mathbf{v}] - T_M[\mathbf{v}]\vert}{T_M[\mathbf{v}]}\leq
\gamma\cdot(\frac{n}{T_M[\mathbf{v}]} - 1) + \frac{\hat{N} + C}{T_M[\mathbf{v}]},
\end{align}
with a probability of at least $1-\beta$. Here, $\hat{N}$ represents the Laplacian noise added to the data number $n$, and $C = O(\frac{\log^{1/2}(1/\delta)\log^{1/4}(1/\beta)}{\epsilon^{\prime}})$.
\end{theorem}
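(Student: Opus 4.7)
The plan is to decompose the overall estimation error into the Laplace noise on the count $n$ and the FM-sketch estimation error on the union cardinality, then bound each piece separately. Let $U = \left|\bigcup_{i=1}^{l} \overline{\mathbb{I}_{v_{(i)}}}\right|$ so that $T_M[\mathbf{v}] = n - U$ by the inclusion--exclusion identity used in Algorithm~\ref{alg:fm_CarEst}, and let $\hat{U} = (1+\gamma)^{\alpha} - P$ where $\alpha$ is the merged sketch from line 4 of the algorithm and $P = \sum_{i=1}^{l}(u_i - 1)k_p$ is the phantom correction. Since $\hat{T}_M[\mathbf{v}] = \max\{\hat{n} - \hat{U},\, 0\}$, a direct triangle inequality (together with a short case analysis to handle the clipping to $0$, which can only decrease the error beyond what the triangle bound allows) yields
\begin{equation*}
\bigl|\hat{T}_M[\mathbf{v}] - T_M[\mathbf{v}]\bigr| \;\leq\; |\hat{n} - n| + |\hat{U} - U| \;=\; \hat{N} + |\hat{U} - U|.
\end{equation*}
The stated bound then reduces to showing $|\hat{U} - U| \leq \gamma U + C$ with probability at least $1 - \beta$, since $U/T_M[\mathbf{v}] = n/T_M[\mathbf{v}] - 1$.

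Next I would invoke the mergeability of FM sketches. By construction, the aggregated $\alpha$ is the maximum over independent geometric hashes of every distinct element in $\bigcup_{i}\overline{\mathbb{I}_{v_{(i)}}}$, pooled with the phantom geometrics injected by each invocation of \texttt{DPFM}, so $\alpha$ is distributionally identical to a single DP FM-sketch on a multiset of effective size $U + P$. The standard result cited in Section~\ref{subsec:fm} then gives $(1+\gamma)^{\alpha} \in [(U+P)/(1+\gamma),\, (U+P)(1+\gamma)]$ with constant probability per trial. Running $t$ independent trials and taking the harmonic mean in line 7 of Algorithm~\ref{alg:fm_CarEst} concentrates the estimator, and a Chernoff-type tail bound on the log-scale aggregate yields
\begin{equation*}
\bigl|(1+\gamma)^{\alpha} - (U+P)\bigr| \;\leq\; \gamma (U+P) + O\!\left(k_p \log^{1/4}(1/\beta)\right)
\end{equation*}
with probability at least $1-\beta$. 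Subtracting $P$ deterministically then gives $|\hat{U} - U| \leq \gamma U + \gamma P + O(k_p \log^{1/4}(1/\beta))$.

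To collapse these into the claimed $C$, note that $k_p = \lceil 1/(e^{\epsilon'} - 1)\rceil = O(1/\epsilon')$, and by line 2 of Algorithm~\ref{alg:fm_LocEnc} we have $1/\epsilon' = 4\sqrt{td\log(1/\delta)}/\epsilon$. Both the residual $\gamma P$ (a constant multiple of $k_p$ up to domain-size factors) and the $O(k_p\log^{1/4}(1/\beta))$ concentration term therefore fit into $C = O\!\left(\log^{1/2}(1/\delta)\log^{1/4}(1/\beta)/\epsilon'\right)$, with constants absorbing $t$, $d$, and $l$. Combining with the decomposition above and dividing by $T_M[\mathbf{v}]$ yields exactly the stated inequality.

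The main obstacle will be the sharp concentration analysis underlying the second step: obtaining a clean high-probability additive bound on $|(1+\gamma)^{\alpha} - (U+P)|$ at the correct $\log^{1/4}(1/\beta)$ rate requires carefully handling the harmonic-mean aggregation (rather than the simpler $1/e$-quantile variant stated in Section~\ref{subsec:fm}) and controlling the tail of the maximum of a heavy-tailed collection of geometric random variables whose count mixes data and phantom contributions. The subtle technical point is that phantoms injected by distinct \texttt{DPFM} calls may or may not be dominated by the same max depending on hash outcomes, so the effective number of phantoms contributing to $\alpha$ is itself a random variable, and a naive union bound over the $t$ trials loses a factor of $\log(t/\beta)$ that must be tightened through a direct tail analysis of the harmonic mean.
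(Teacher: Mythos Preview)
Your approach is essentially the same as the paper's: decompose via inclusion--exclusion into the Laplace error $\hat{N}$ on $n$ and the FM-sketch error on the union cardinality $U = n - T_M[\mathbf{v}]$, bound the latter multiplicatively by $(1+\gamma)$ plus an additive $C$, and then do the same algebra. The only real difference is that the paper packages the entire second step---phantoms, repetition, and aggregation---into a single cited utility lemma (their Lemma~\ref{lem:uti-dpfm}, giving $|\mathcal{X}|/(1+\gamma) - C \le k_{FM} \le (1+\gamma)|\mathcal{X}| + C$ with the stated $C$) and applies it directly to $n - T_M[\mathbf{v}]$, whereas you propose to re-derive that concentration from scratch; the ``main obstacle'' you flag about harmonic-mean tails and phantom counting is exactly the content of that lemma, which the paper does not prove but simply invokes.
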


Due to the space limitation, the proofs are shown in the Appendix.
As shown in Theorem~\ref{the:fm_error}, the relative error tends to be larger when the proportion of $T_M[\mathbf{v}]$ in $n$ decreases and when the count $T_M[\mathbf{v}]$ decreases. Meanwhile, a stronger privacy level, represented by a smaller value of $\epsilon$, can indeed lead to a larger relative error, as indicated by the term $C$ in the theorem. 

\subsection{Privacy and Communication Cost}\label{subsec:analysis}

\mypara{Overall Privacy Analysis.}  As shown in the workflow of \textsf{VertiMRF} in Algorithm~\ref{alg:overall_workflow}, \texttt{LocMRF} on all $m$ local parties consumes $(m\cdot \frac{\epsilon}{2m}, m\cdot \frac{\delta}{2m})$-DP.  As stated in Theorem~\ref{the:fm_priv}, the remaining $(\frac{\epsilon}{2}, \frac{\delta}{2})$-DP is allocated for encoding the $d$ attributes for $t$ iterations in \texttt{LocEnc}. According to the sequential composition property of DP, we can conclude that \textsf{VertiMRF} implemented as in Algorithm~\ref{alg:overall_workflow} satisfies $(\epsilon, \delta)$-DP.



\mypara{Communication cost.}
There is one communication round between each party $\mathcal{P}_i$ and the central server in \textsf{VertiMRF}. The communication includes encoded attributes $\mathcal{M}_i$ and the local MRF information $\{\texttt{MRF}_i, \mathcal{S}_i, \mathcal{G}_i\}$. For sketch-based \texttt{LocEnc}, $\mathcal{M}_i$ contains $t\sum_{A^j \in \mathcal{A}_i} u_j$ sketches. For FO-based \texttt{LocEnc}, $\mathcal{M}_i$ contains a noisy version of local dataset.
$\texttt{MRF}_i$ is parameterized by a vector $\Theta$ with length $\sum_{M\in S_i}\prod_{A^j\in M}u_j$, controlled by the maximal clique size $\tau^{\prime}$ for each local MRF. $\mathcal{G}_i$ is represented by a $(\vert\mathcal{A}_i\vert \times \vert\mathcal{A}_i\vert)$-dimensional adjacent matrix, with $\vert\mathcal{A}_i\vert < d$. The information in $\mathcal{S}_i$, which contains serveral attribute tuples, can be ignored in terms of communication costs.  Considering a total of $m$ parties, the communication cost of \textsf{VertiMRF} is $O(td\bar{u}) + O(d^2) + O(m\tau^{'})$ when using sketch-based \texttt{LocEnc} and $O(nd) + O(d^2) + O(m\tau^{'})$ when using FO-based. Here $\bar{u}$ represents the average domain size of attributes.

\section{MRF generation in central server}\label{sec:server_mrf}

After receiving local \texttt{MRF}s and encoded attributes $\mathcal{M}$ from all parties, the process of the central server can be divided into the following phases: generating the global attribute graph (\textbf{Phase 3}), initializing the marginal set thereby estimating the MRF parameter (\textbf{Phase 4}), refining the marginal set thereby optimizing the MRF parameter (\textbf{Phase 5}) and finally sampling the synthetic data (\textbf{Phase 6}).

\subsection{\textbf{\texttt{GraphCom} in Phase 3}} Since the local attribute graphs are disjoint and each one accurately represents the correlation among a subset of attributes, a basic approach to creating a global attribute graph is to combine the disjoint graphs by linking up certain cross-party attribute pairs. However, there are two constraints (CSTR) that must be satisfied when selecting such cross-party attribute pairs, denoted as $(A^i, A^j)$:

\noindent{$\bullet$CSTR1: $(A^i, A^j)$ should exhibit strong correlation.}

\noindent{$\bullet$CSTR2: The domain size of maximal cliques in the resulting attribute graph should not exceed a predefined threshold value $\tau$.}

To satisfy CSTR1, the central server estimates the  R-score~\cite{cai2021data} $R(A^i, A^j)$ for each cross-party attribute pair $(A^i, A^j)$ with \texttt{CarEst} approach introduced in Section~\ref{sec:mar_est} over the received encoded attributes $\mathcal{M}$:
$
R(A^i, A^j) \approx  \frac{\hat{n}}{2}\left\Vert\frac{\hat{T}_{(A^i,A^j)}}{\hat{n}} - \frac{\hat{T}_{A^i}}{\hat{n}}\cdot \frac{\hat{T}_{A^i}}{\hat{n}}\right\Vert_1 ,
$
where $\hat{T}$ denotes the estimated contingency histogram.
As explained in Section~\ref{subsec:privmrf}, attribute pairs with higher R-scores indicate stronger correlation. After the estimation, the server sorts all attribute pairs in descending order based on their estimated R-scores and greedily connects them in the global attribute graph.

For CSTR2, whenever a link between cross-party attributes is added to $\mathcal{G}$, the server checks the domain size of the maximal clique in the triangulated $\mathcal{G}$ to ensure it does not exceed $\tau$. 

$\tau$ is always set empirically to strike the tradeoff between the model utility and computation complexity. A larger $\tau$ enables more flexible marginal selection but incur high computational efficiency. According to our observation, $[10^5, 5\times10^6]$ is an suitable range for $\tau$. If CSTR2 is satisfied, the process of adding links continues. This process continues until it is no longer possible to satisfy CSTR2. 



\subsection{\texttt{InitMRF} in Phase 4} 

After generating the global attribute graph, the next step is to construct the global MRF. As shown in Section~\ref{subsec:privmrf}, the MRF construction process essentially is to learn a parameter vector $\Theta$ on a marginal set $\mathcal{S}$. In PrivMRF, $\mathcal{S}$ is first initialized by selecting the most highly correlated marginal for each attribute and then refined through adding marginals which cannot be accurately inferred by the global MRF. Meanwhile, $\Theta$ is initialized and optimized by reducing the error of inferring the marginals in $\mathcal{S}$ using a mirror descent algorithm~\cite{cai2021data}. The initialization of $\mathcal{S}$ serves to initialize the global MRF and select an reliable direction for the subsequent refinement of $\mathcal{S}$ and learning of $\Theta$. we follow this process and initialize an $\mathcal{S}$ to initialize the global MRF.

However, unlike PrivMRF, the central server in our setting lacks access to the raw data, it is not practical to compute sufficiently accurate correlations among each attribute with multiple marginals to select the highly correlated ones. More severely, the true value of the contingency histograms are unavailable to compute the inferring error of $\Theta$. Therefore, it is essential to select marginals that can be accurately estimated based on the DP shared information of local parties. With the observation that local \texttt{MRF}s can estimate their marginals with relative low error, we take the union of the local marginal sets as the initialized $\mathcal{S}$, that is $\mathcal{S} = \cup_{i=1}^{m}\mathcal{S}_i$ and take the local \texttt{MRF} inferred contingency histograms $\cup_{i=1}^{m}\{\texttt{MRF}_i(M)\vert \forall M\in \mathcal{S}_i\}$ as the ground truth of contingency histograms, where $\texttt{MRF}_i(M)$ refers to the inferred result of $\texttt{MRF}_i$ on $M$. Based on this ground truth and the initialized $\mathcal{S}$, the server initializes the global MRF.

By observing Equation~(\ref{equ:MRF_distri}), we notice that an MRF encodes the correlation among multiple attributes by representing the marginals in the marginal set. Based on this observation, We can say that the inferred results of each $\texttt{MRF}_i$ on the marginals in the marginal set $\mathcal{S}_i$ encapsulate all the "knowledge" encoded by the MRF. Therefore, such initialization can also be viewed as transferring the knowledge from the local \texttt{MRF}s to the global MRF.


\subsection{\texttt{OptMRF} in Phase 5} \label{sec:global_mrf}
After initialization, the encoded correlation in the local \texttt{MRF}s has already been transferred to the global MRF. However, the correlation among cross-party attributes has not been characterized by the global MRF. To address this, the central server further refines $\mathcal{S}$ by inserting cross-party marginals whose contingency histograms can be estimated using the \texttt{CarEst} approach over the encoded local attributes $\mathcal{M}$. To minimize noise, we mainly select the low-way cross-party marginals denoted as $\mathcal{S}^c$. Specifically, the average count in each cell of $T_M$ of each marginal $M \in \mathcal{S}^c$ is controlled to be larger than a threshold $d^{c}$, as given by $\frac{\hat{n}}{\prod_{A^i\in M}\vert u_i\vert}\geq d^{c}$, where $d^{c}$ controls the error of the estimation of \texttt{CarEst}, which is also set empirically. As shown in Theorem~\ref{the:fm_error}, accurate estimation of CarEst becomes challenging with small average counts in the contingency histogram.
It is worth noting that the optimization of $\Theta$ involves multiple rounds. In each round, the server randomly samples several cross-party marginals from $\mathcal{S}^c$ and optimize $\Theta$ mainly on the ones with significant inferring error, as measured by the L1 distance between the inferred histograms of the global MRF and the true values estimated using \texttt{CarEst} over the encoded attributes $\mathcal{M}$.

Once the global MRF is constructed, approximating the data distribution and sampling synthetic data becomes a straightforward task. For detailed information, please refer to ~\cite{cai2021data}.

\section{Dimension Reduction and Consistency}\label{sec:impro}
While \textbf{Phase 1 - 5} with details introduced in Section 4 and Section 5 can compose a complete algorithm for differentially private vertical data synthesis, we encounter a dilemma when optimizing the algorithm by tuning the granularities of the attributes.
With a coarser granularity, the \texttt{LocEnc-CarEst} can have smaller relative errors, but the \texttt{LocMRF} and global MRF becomes inferior to its best performance with more fine-grained granularity.
Thus, configuring different granularities for those parts can be an alternative improvement.
However, there are two issues for this inconsistent granularity solution: 1) how to reduce dimension while keeping as much information as possible; 2) how to enforce consistency between the frequencies of different granularities.


\subsection{Dimension reduction}\label{subsec:dim_rec}

As stated in Theorem~\ref{the:fm_error}, when the domain sizes of attributes increase, the estimated cross-party marginals by \texttt{CarEst} can deviate significantly. This deviation occurs because the expected number of data points in each cell of the contingency histogram decreases. To address this issue, we propose binning the attributes to reduce the domain sizes. The binned attributes are encoded using \texttt{LocEnc} and sent to the server (\textbf{Phase 2}). The encoded attributes are then used to calculate R-scores in the \texttt{GraphCom} procedure (\textbf{Phase 3}) and estimate the cross-party marginals to optimize the global MRF (\textbf{Phase 5}). However, since the global MRF is constructed based on the raw attributes without binning, the estimated contingency histogram cannot be used directly. To overcome this, we employ a histogram recovery technique to transform the estimated low-dimensional histograms of the binned attributes into the high-dimensional ones.

\textbf{\textit{The basic idea is to approximate the high-dimensional distributions using low-dimensional ones.}} According to the joint distribution formula, when considering $(A,B)$ as a marginal for estimation, where $X$ and $Y$ are the binned versions of $A$ and $B$ respectively, the high-dimensional marginal distribution can be estimated as $Pr[A,B] \approx \sum_{(X,Y)}Pr[X, Y]\cdot Pr[A|X]\cdot Pr[B|Y].$
Here, $Pr[X, Y]$ represents the low-dimensional distribution over the binned attributes, while $Pr[A|X]$ and $Pr[B|Y]$ are referred to as value distributions and are also low-dimensional. Figure~\ref{fig:histogram} provides a visualization of our dimension reduction technique. 

\begin{figure}[th]
    \vspace{-0.12in}
	\centering
	\includegraphics[width=0.48\textwidth]{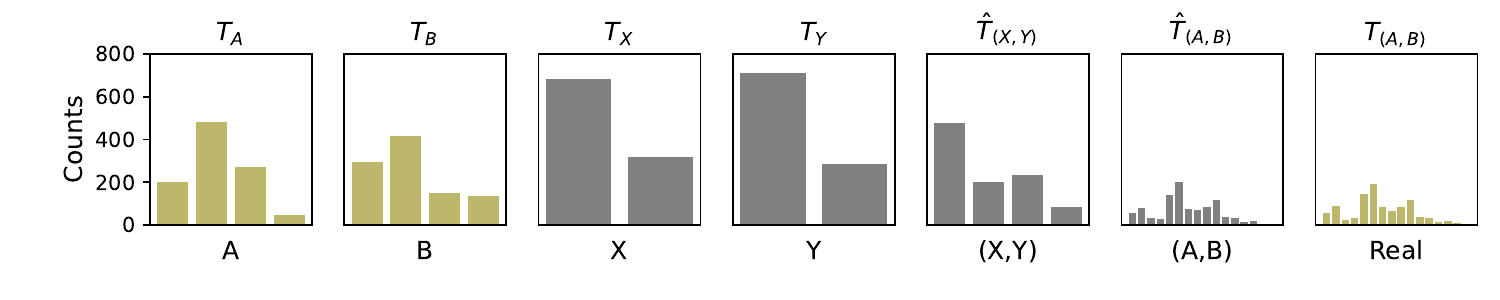}
    \vspace{-0.25in}
    \caption{Instantiation of Dimension Reduction. \label{fig:histogram}}
    \vspace{-0.17in}
\end{figure}

\noindent \textbf{Attribute binning.} Each party $\mathcal{P}_i$ applies equal-width binning to each local attribute $A^j\in\mathcal{A}_i$ before \texttt{LocEnc}. The number of bins $b$ is specified by the server or negotiated by data parties. Equal-width binning is based solely on the domain size of each attribute and does not expose any statistical information from the raw data. However, to facilitate the subsequent histogram recovery, it is necessary to preserve the value distribution within each bin for each attribute during the binning process and send it to the server.

For an attribute $A^j$ with a domain size of $u_j$, the values are allocated to $b$ bins with equal width. Suppose the values ${v_i^1,..., v_i^k}$ are allocated to the $l$-th bin, and their corresponding frequencies are ${n_i^1,..., n_i^k}$. The distribution of the $l$-th bin of $A^j$ is defined as: $$U_{(j,l)} = \left[U_{(j,l)}^1,\ldots,U_{(j,l)}^k\right] \triangleq \left[\frac{n_i^1}{\sum_{h=1}^{k}n_i^h},\ldots, \frac{n_i^k}{\sum_{h=1}^{k}n_i^h}\right].$$ Since the frequencies are obtained from the raw data, the resulting value distributions may expose sensitive statistical information. To ensure privacy, we utilize the Laplacian mechanism to perturb the frequencies with a sensitivity of $1$ and a privacy budget of $\epsilon'$. The value distributions are then computed based on the noisy frequencies. Considering the sequential composition of DP, the total privacy cost of the overall binning procedure should be $d\epsilon'$. 

\noindent \textbf{Histogram recovery (\texttt{HisRec}).} Let's assume the contingency histogram of the marginal $(A^i,A^j)$ estimated by \texttt{CarEst} over the binned attributes is denoted as $\hat{T}^{(low)}$. In $\hat{T}^{(low)}$, $\hat{T}^{(low)}[(v^{\prime}_{(i)},v^{\prime}_{(j)})]$ represents the number of data points falling in the $v^{\prime}_{(i)}$-th bin of $A^i$ and the $v^{\prime}_{(j)}$-th bin of $A^j$. We can recover the high-dimensional histogram $\hat{T}^{(high)}$ by estimating the number of data points falling in each cell where $A^i = v_{(i)}$ and $A^j = v_{(j)}$ as:
$$\hat{T}^{(high)}\left[(v_{(i)},v_{(j)})\right] = \hat{T}^{(low)}\left[(v^{\prime}_{(i)}, v^{\prime}_{(j)})\right]\cdot U_{(i,v^{\prime}_{(i)})}^{h^{\prime}_{(i)}}\cdot U_{(j,v^{\prime}_{(j)})}^{h^{\prime}_{(j)}}.$$
where $h^{\prime}{(i)}$ and $h^{\prime}{(j)}$ represent the value index of $v_{(i)}$ allocated to the $v^{\prime}_{(i)}$-th bin of $A^i$ and $v_{(j)}$ allocated to the $v^{\prime}_{(j)}$-th bin of $A^j$. 
Although demonstrated with the two-way marginal case, this technique can be easily extended to higher-way marginal cases.

\subsection{Consistency enforcement}\label{subsec:consis}

As discussed in Section~\ref{sec:global_mrf}, estimating contingency histograms for intra-party and cross-party marginals in the marginal set $\mathcal{S}$ is vital for constructing the global MRF. Intra-party marginals are estimated using local \texttt{MRF}s, while cross-party marginals are estimated using \texttt{CarEst}. Nevertheless, variations in the sources of randomness can cause inconsistencies between the estimated histograms from local \texttt{MRF}s and \texttt{CarEst} for specific attribute sets.

Let's consider a two-way cross-party marginal, denoted as $(A^i, A^j)$. The contingency histogram estimated by \texttt{CarEst} is denoted as $\hat{T}_{(A^i, A^j)}$, or simply $\hat{T}$. If we marginalize $\hat{T}$ to obtain $\hat{T}_{(A^i)}$ and $\hat{T}_{(A^j)}$, these results may exhibit inconsistencies with the histograms $\widetilde{T}_{(A^i)}$ and $\widetilde{T}_{(A^j)}$ inferred from local \texttt{MRF}s.

To address this inconsistency, we employ a two-step technique to ensure consistency among ${\hat{T}, \widetilde{T}_{(A^i)}, \widetilde{T}_{(A^j)}}$. Firstly, we transform all three contingency histograms into marginal distributions by normalizing them. For simplicity, we continue to use the notation $T$ to represent the marginal distribution. The two-step technique operates on each attribute individually, taking $A^i$ as an example.

\mypara{Step 1: consistency.}
We begin by establishing agreement between $\widetilde{T}_{(A^i)}$ and $\hat{T}_{(A^i)}$ by taking their arithmetic mean: $\bar{T} = \frac{\widetilde{T}_{(A^i)}+\hat{T}_{(A^i)}}{2}$. We then update both $\hat{T}$ and $\widetilde{T}_{(A^i)}$ to be consistent with $\bar{T}$. Specifically, $\widetilde{T}_{(A^i)}$ is directly set to $\bar{T}$. As for $\hat{T}$, changing a cell in $\hat{T}_{(A^i)}$ would affect $u_j$ cells in $\hat{T}$ (where $u_j$ is a constant). To maintain the unchanged marginal distribution $\hat{T}_{(A^j)}$, we calculate the difference between $\hat{T}{(A^i)}$ and $\bar{T}$ for each cell when $A^i$ takes a specific value $v_{(i)}$, and then distribute the difference equally among all $u_j$ affected cells:  $\hat{T}[(v_{(i)},v_{(j)})] = \hat{T}[(v_{(i)},v_{(j)})] + \frac{\bar{T}[v_{(i)}]- \hat{T}_{(A^i)}[v_{(i)}]}{u_{j}}.$

\mypara{Step 2: normalization.}
After the consistency step, negative numbers may appear in the marginal distribution. To ensure non-negativity, we set all negative numbers to 0. However, this adjustment may cause the sum of the distribution $\hat{T}$ to exceed 1. To address this, we re-normalize $\hat{T}$. Nevertheless, this re-normalization may introduce inconsistency between $\hat{T}_{(A^i, A^j)}$ and $\widetilde{T}{(A^i)}$ again. To mitigate this issue, we repeat the consistency and normalization process for each attribute multiple times until the resulting inconsistency becomes negligible. Finally, the consistent marginal distribution is transformed into a contingency histogram by multiplying it with a DP-sanitized data number $\hat{n}$. 

This consistency enforcement technique is employed in \textbf{Phase 5} and can be directly extended to higher-way marginal cases. 


\section{Experiments}\label{sec:experiments}
In this section, we first conduct~\footnote{The code is available at \url{https://github.com/private-mechanism/Verti_MRF}} the end-to-end comparisons of \textsf{VertiMRF} to baseline methods. Then, we validate each component by conducting the ablation experiments (shown in Appendix~\ref{sec:add_exp}). The final experimental results demonstrate the superiority of \textsf{VertiMRF}.

\subsection{Experiment settings}

\noindent\textbf{Datasets.} We evaluate our algorithms on four datasets in Table~\ref{tab:data}.
\begin{itemize}[itemindent=2ex, listparindent=0ex, leftmargin=0ex, labelwidth=0px]
\item\textbf{Adult~\cite{asuncion2007uci}.} The data is sourced from the United States Census Bureau. It consists of 45,222 instances, each containing 15 attributes capturing demographic and socio-economic information, such as age, race, education, and income level.

\item\textbf{NLTCS~\cite{manton2010national}.} The data is collected from a study on health status of older adults. It includes 21,574 records and 16 attributes that describe demographic information and health conditions.

\item\textbf{BR2000~\cite{steven2015ipums}.} The data originates from a population Census in Brazil and contains 38,000 records. It includes 13 attributes that provide information about demographic, and economic aspects.

\item\textbf{Fire~\cite{ridgeway2021challenge}.} The data includes records of fire unit responses to calls in San Francisco. It consists of 305,119 records, with each record containing 15 attributes.
\end{itemize}

\begin{table}[]
\centering
\setlength\tabcolsep{3pt}
\caption{Characteristics of Datasets}
\label{tab:data}
\begin{tabular}{|c|ccccc|}
\toprule
 \textbf{Dataset}&  \textbf{Records}  & \textbf{Attrs}  &\textbf{Dom. } & \textbf{Dom. Size}& \textbf{Attr. Split} \\ \hline
 NLTCS&  21574&  16&  2&  $\approx 6\times 10^4$& 8\&8\\ 
 Adult&  45222&  15&  2-42&  $\approx 4\times 10^{14}$& 8\&7\\ 
 BR2000&  38000&  13&  2-21&  $\approx 3\times 10^{9}$& 7\&6\\ 
 Fire&  305119&  15&  2-46&  $\approx 1\times 10^{15}$& 8\&7\\
\bottomrule
\end{tabular}
\end{table}

\begin{figure*}[t]
	\centering
	\includegraphics[width=1.0\textwidth]{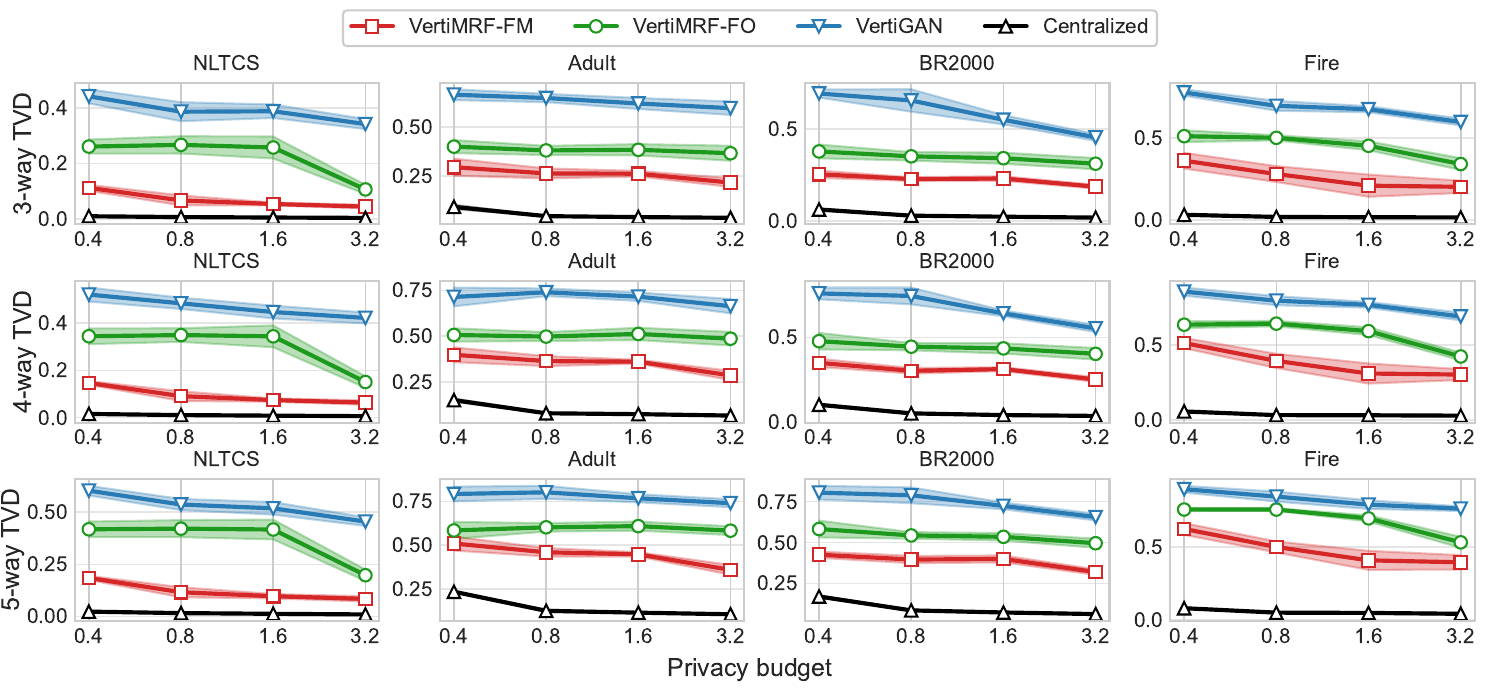}
    \caption{$l$-way TVD vs. privacy budget $\epsilon$. \label{fig:tvd}}
\end{figure*}

\noindent\textbf{Metrics.} We evaluate the performance based on two metrics. 
\begin{itemize}[itemindent=2ex, listparindent=0ex, leftmargin=0ex, labelwidth=0px]
\item \textbf{$l$-way TVD.} We randomly sample 300 marginals with $l$ attributes each from the synthetic data. For each marginal, we compute the total variation distance (TVD) between the raw and synthetic data. We calculate the average TVD for $l$ values of {3, 4, 5} across all marginals and report the average measurement based on 10 iterations.

\item \textbf{Misclassification rate.} 
We employ synthetic data to train SVM classifiers for predicting specific attributes based on all other attributes. The predicted attribute for each dataset is shown in the "label" row of Table~\ref{tab:data}. For NLTCS, each attribute serves as the label to predict, and the average result is reported. We use $80\%$ of the raw data to generate synthetic data and train the classifier, while the remaining data is used as the test set to report the misclassification rate. We utilize 5-fold cross-validation and report the average.
\end{itemize}

\noindent\textbf{Compared methods.} We compare the following methods.
\begin{itemize}[itemindent=2ex, listparindent=0ex, leftmargin=0ex, labelwidth=0px]

\item {\textbf{\textsf{VertiGAN}}} 
~\cite{jiang2023distributed} employs a partitioned GAN with a multi-output global generator and multiple local discriminators. To ensure privacy, local discriminators are trained with DP-SGD~\cite{abadi2016deep} using raw data. The global generator is updated by aggregating the local gradients. The privacy budget is fully utilized during the DP-SGD procedure of local discriminator.

\item{\textbf{\textsf{Centralized}}} refers to PrivMRF~\cite{cai2021data} in the centralized setting.

\item{\textbf{\textsf{VertiMRF-FO} $\&$ \textsf{VertiMRF-FM}.}} Both methods are based on our proposed \textsf{VertiMRF} framework, with one equipped with FO-based and the other with sketch-based \texttt{LocEnc} approach.
\end{itemize}

In addition, we encountered the \textsf{DPLT} approach~\cite{tang2019differentially}, which utilizes a latent tree structure to capture attribute correlations. Despite our diligent efforts to replicate \textsf{DPLT}, we encountered significant computation overhead when calculating the marginal distribution of high-level latent attributes. We also faced ambiguity regarding data synthesis from the constructed tree. As a result, we have chosen not to include \textsf{DPLT} in our comparison.

\mypara{Parameter setting.} In our experiments, we use default values for \textsf{VertiMRF-FM}, setting the repetition number of the DP FM sketch to $t = 2000$ and $\delta = 1/n$. The network structure of \textsf{VertiGAN} follows the configuration described in the original paper~\cite{jiang2023distributed} and the privacy is tracked with RDP~\cite{mironov2017renyi}. For all datasets except NLTCS, we set the binning number to $b=4$. As for the privacy budget allocation, we allocate $40\%$ to \texttt{LocMRF}, $40\%$ to \texttt{LocEnc} (with $10\%$ of the $40\%$ used to generate a noisy data count $\hat{n}$), and the rest $20\%$ for sanitizing value distributions in the binning procedure. By default, our algorithms are validated in a two-party setting, the attribute distribution on the two parties is shown in the "Attr. Split" row of Table~\ref{tab:data}, e.g., "8\&7" means that $8$ attributes are assigned to one party and other $7$ attributes are assigned to the other one.




\subsection{End to End Comparisons}
\noindent\underline{\textbf{Comparison on $l$-way TVD.}}
Figure~\ref{fig:tvd} compares the methods based on the average TVD of the 
$l$-way marginals. As shown, \textsf{VertiMRF-FM} and \textsf{VertiMRF-FO} consistently produce smaller TVD than \textsf{Verti-GAN} regardless of privacy cost or dataset. This demonstrates the superiority of \textsf{VertiMRF}. 
Additionally, \textsf{VertiMRF-FM} outperforms \textsf{VertiMRF-FO} across all cases, indicating that the sketch-based \texttt{LocEnc} and \texttt{CarEst} can provide more accurate estimation of the cross-party marginals compared to the FO-based approaches. It is worth noting that \textsf{VertiGAN} consistently yields significantly larger TVD results. This can be attributed to the fact that GAN-based data synthesis methods are not well-suited for synthesizing tabular data, as discussed in previous studies~\cite{cai2021data, mckenna2019graphical}. 
Furthermore, the advantages of \textsf{Centralized} over \textsf{VertiMRF-FM} become more prominent in datasets with larger domain sizes, such as adult, BR2000, and Fire. Although \textsf{VertiMRF-FM} performs closely to \textsf{Centralized} on NLTCS when $\epsilon$ is larger, the difference becomes more pronounced in the other three datasets due to their larger domain sizes. This aligns with our analysis in Theorem~\ref{the:fm_error}. A larger domain size leads to smaller average count in a contingency histogram, thereby deriving a larger estimation error of \texttt{CarEst}.

\noindent\underline{\textbf{Comparison on SVM classification.}}
Figure~\ref{fig:svm_accuracy} presents the average misclassification rates of the SVM classifiers trained on the synthetic data. \textsf{VertiMRF-FM} consistently outperforms other vertical methods. Misclassification rates of \textsf{VertiGAN} are as high as 
$40\%$ even when $\epsilon = 3.2$, which is significantly larger than both \textsf{VertiMRF-FM} and \textsf{VertiMRF-FO} methods. Additionally, the advantages of \textsf{centralized} over \textsf{VertiMRF-FM} are magnified as the domain size of the dataset increases. These findings align with results shown in Figure~\ref{fig:tvd}, illustrating the effectiveness of \textsf{VertiMRF-FM} and the negative impact of large domain sizes. Similar Results on Br2000 and Fire datasets are shown in Figure~\ref{fig:br2000_svm_accuracy} in Appendix~\ref{sec:add_exp}.
\begin{figure*}[t]
	\centering
	\includegraphics[width=1.0\textwidth]{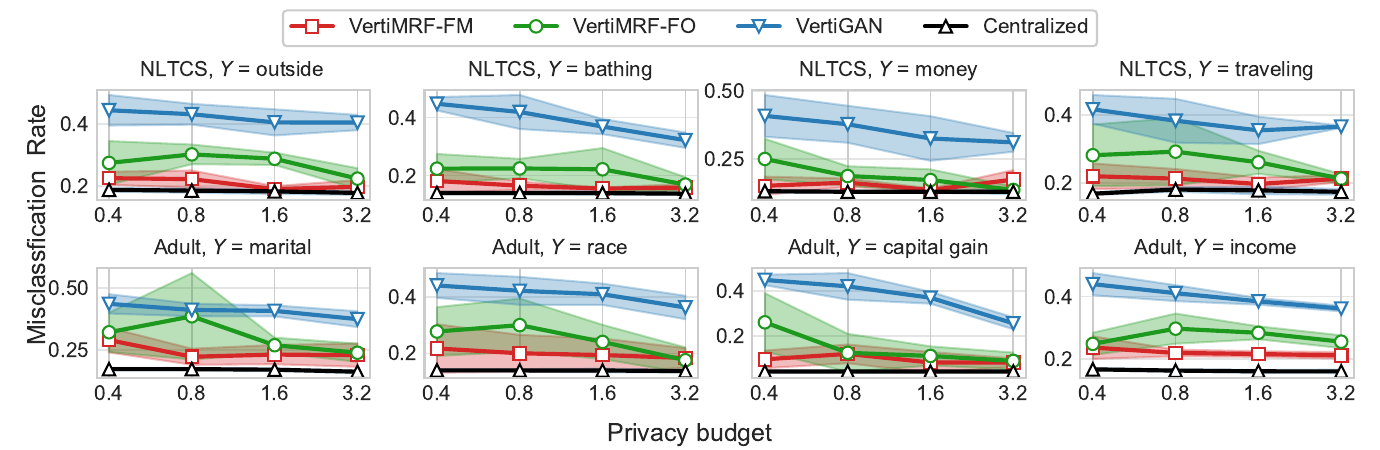}
    \caption{SVM misclassification rate vs. privacy budget $\epsilon$. \label{fig:svm_accuracy}}
\end{figure*}

\noindent\underline{\textbf{Impact of the number of parties.}} We examine the impact of the party number $m$ on the utility of synthetic data. Figure~\ref{fig:M} compares the TVD results obtained under different
$m$ settings on the NLTCS and Adult datasets with a privacy budget $\epsilon = 0.8$. In the experiments, $m = ALL$ indicates that the attributes are distributed to 
\textit{d} parties, with each party having one distinct attribute. The results demonstrate that as $m$ increases, the TVD results also increase. This is primarily because when attributes are partitioned to more parties, \texttt{LocMRF} with high precision contributes less to the global MRF.
In such cases, the \texttt{LocEnc} and \texttt{CarEst} procedures dominate the errors.
\begin{figure}[htbp]
	\centering
	\includegraphics[width=0.45\textwidth]{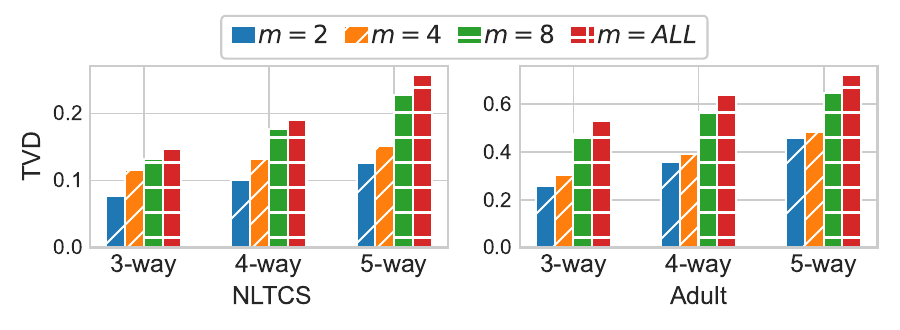}
    \caption{Impact of party number ($\epsilon=0.8$). \label{fig:M}}
\end{figure}

\begin{figure}[htbp]
	\centering
	\includegraphics[width=0.45\textwidth]{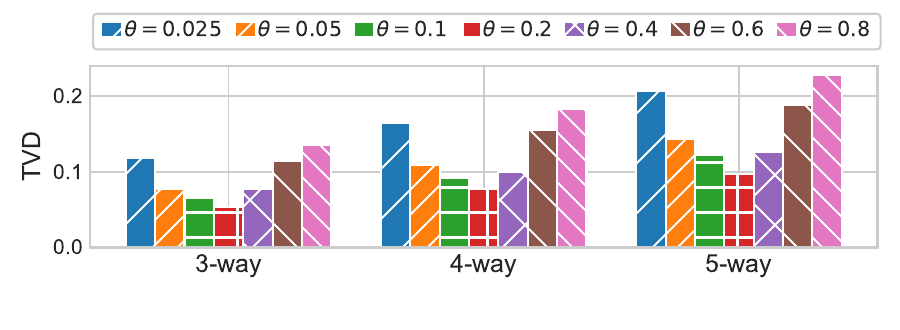}
    \caption{Impact of privacy budget allocation ($\epsilon=0.8$). \label{fig:theta}}
\end{figure}

\noindent\underline{\textbf{Impact of privacy budget allocation.}} We examine the impact of privacy budget allocation on the utility of synthetic data. In this study, we fix the total privacy budget as $\epsilon = 0.8$ and compare the TVD results on the NLTCS dataset under different privacy budget ratio $\theta$ assigned to \texttt{LocMRF}. The remaining $(1-\theta)$ proportion of the privacy budget is then fully allocated to \texttt{LocEnc}. Figure~\ref{fig:theta} illustrates that as 
$\theta$ increases, the TVD results initially decrease when $\theta \leq 0.2$, but then increase when $\theta > 0.2$. These results highlight the tradeoff between the impacts of \texttt{LocEnc} and \texttt{LocMRF}. \texttt{LocMRF} aids in estimating the intra-party marginals, while \texttt{LocEnc} aids in estimating the cross-party marginals. When 
$\theta$ is small, the error is dominated by inaccurate estimation of the intra-party marginals. Conversely, when 
$\theta$ is larger, the error is mainly caused by inaccurate estimation of the cross-party marginals.

\noindent\underline{\textbf{Communication and computation cost.}} 
In Table~\ref{tab:commu_cost}, we compare the communication costs and computation time of the four methods on Adult dataset. As analyzed in Section~\ref{subsec:analysis}, the communication overhead of \textsf{VertiMRF-FM} is expected to be smaller than that of \textsf{VertiMRF-FO} when $t\bar{u}<n$. Consistent with our analysis, we observe that the overhead of \textsf{VertiMRF-FM} is smaller than that of \textsf{VertiMRF-FO} when $t\bar{u}<n$  with  $t = 2000$ but larger when $t\bar{u}>n$ with $t = 8000$. The communication in \textsf{VertiGAN} involves sending gradients of local generators to the server and broadcasting the updated model to the local parties. Therefore, the overall communication cost depends on the model size and the number of communication rounds.

In terms of computation time, when using the sketch-based \texttt{LocEnc}, each local party needs to perform $tn$ hash mappings, whereas the FO-based \texttt{LocEnc} only requires $n|A_i|$ perturbations. Since $t>>|A_i|$, the FO-based \texttt{LocEnc} requires less computation time. The hash mappings can be accelerated by parallel computation since they run independently. By introducing $40$ parallel threads, the computation time can be significantly reduced. On the server side, the computation time is nearly identical for both \textsf{VertiMRF-FM} and \textsf{VertiMRF-FO}. That's because apart from \texttt{CarEst}, both methods execute identical computations on the server side. Whether it is FO-based \texttt{CarEst} or sketch-based \texttt{CarEst}, the computation process solely involves simple calculations and does not significantly affect the computation time. In \textsf{VertiGAN}, each party generates fake data and computes model gradients, while the server aggregates and broadcasts the updated model. Therefore, the most time consumption occurs at the local party.

\begin{table}[]\small
\setlength\tabcolsep{3pt}
\caption{Communication cost and computation time}
\label{tab:commu_cost}
\begin{tabular}{c|c|c|c|c}
\toprule
\multirow{2}{*}{Dataset} & \multirow{2}{*}{methods} & \multirow{2}{*}{commu. cost} & \multicolumn{2}{c}{compu. time} \\ 
                  &                  &                   &           per party&   server       \\ \hline
\multirow{5}{*}{Adult} &              \textsf{VertiMRF-FM} ($t = 2000$)&                   $15$ Mb&        $23.1$ min &       $67$ min\\
                  &                   \makecell{\textsf{VertiMRF-FM} \\($t = 2000$, $threads = 40$)}&                   $4.9$ Mb&         $4.1$ min  &        $67$ min \\
                  &                   \textsf{VertiMRF-FM} ($t = 8000$)&                   $22$ Mb&         $93$ min  &        $67$ min \\
                  &                   \textsf{VertiMRF-FO}&                   $18$ Mb&          $2.5$ min &       $67$ min \\
                  &                   \textsf{VertiGAN}&                   $112$ Mb&         $8.3$ min  &  $10$ s        \\
                  &                   Centralized&                   - &          - & $16$ min   \\ 
\bottomrule
\end{tabular}
\end{table}

\begin{table}[]\small
\setlength\tabcolsep{3pt}
\caption{3-way TVD under different attribute distributions}
\label{tab:splits}
\begin{tabular}{c|c|c|c|c}
\toprule
Splitters & Params. & \textsf{VertiMRF-FM} & \textsf{VertiMRF-FO} &\textsf{VertiGAN} \\ \hline
\multirow{5}{*}{Importance} &         0.1&                   \textbf{0.0583 ($\pm$0.005)}&         0.234($\pm$0.023)	&0.426 ($\pm$0.027) \\
                  &                   1&                 \textbf{0.0667 ($\pm$0.021)}&	0.249 ($\pm$0.017)&	0.430 ($\pm$0.056) \\
                  &                   10&                   \textbf{0.0589 ($\pm$0.006)}&	0.257 ($\pm$0.019)&	0.458 ($\pm$0.081) \\
                  &                   100&                   \textbf{0.0648 ($\pm$0.007)}& 	0.266 ($\pm$0.022)& 	0.465 ($\pm$0.068)\\ 
                  \hline
\multirow{5}{*}{Correlation} &         0&                   \textbf{0.0735 ($\pm$0.007)}&	0.261($\pm$0.027)&	0.436 ($\pm$0.034) \\
                  &                   0.3&                 \textbf{0.0524 ($\pm$0.006)}&	0.296 ($\pm$0.024)&	0.416 ($\pm$0.031) \\
                  &                  0.6& \textbf{0.0684 ($\pm$0.006)}&	0.272 ($\pm$0.023)&	0.438 ($\pm$0.038) \\
                  &                 1.0&  \textbf{0.0678 ($\pm$0.009)}&	0.281 ($\pm$0.034)&	0.401 ($\pm$0.042)\\ 
\bottomrule
\end{tabular}
\end{table}

\noindent\underline{\textbf{Impact of different attribute distributions.}} We calibrate the importance and correlation of attributes from different data parties based on the attribute splitters proposed for VFL tasks in Vertibench~\cite{wu2023vertibench}, thereby evaluating the impact of varying attribute distributions on algorithm performance.
Table~\ref{tab:splits} summarizes the resulting 3-way TVD results, i.e., mean and standard deviation across 5 independent runs, under different parameter settings for each algorithm on NLTCS dataset. As shown, the superiority of \textsf{VertiMRF-FM} on other baseline algorithms is significant and stable with respect to different splitting strategies. Furthermore, the TVD results for all algorithms fluctuate within a narrow range as parameters $\alpha$ and $\beta$ vary, indicating that the performance of these algorithms is robust against variations in feature splits. 

\section{Related Work}\label{sec:related_work}
We review related work from the following three perspectives. More detailed related work can be referred to \cite{zhao2023fullversion}.

\noindent\textbf{DP data synthesis.}
There have been plenty of approaches~\cite{aydore2021differentially,li2014dpsynthesizer,ge2020kamino,li2021dpsyn,mckenna2019graphical,ren2018lopub,li2014differentially} to generate synthetic data with DP guarantee, which can be categorized into GAN-based ~\cite{beaulieu2019privacy,abay2019privacy,zhang2018differentially,frigerio2019differentially,chen2020gs,jordon2018pate}, game-based ~\cite{gaboardi2014dual,vietri2020new,hardt2012simple}, and marginal-based approaches~\cite{mckenna2019graphical,mckenna2021winning,zhang2017privbayes,cai2021data,zhang2021privsyn}. Among them, the marginal-based ones tend to perform best, aiming to approximate the joint distribution of high-dimensional data with multiple low-way marginals. Such an approximation can help to circumvent the curse of dimensionality, i.e., the exponentially exploded sizes of the contingency histogram with the increased attribute number. For example, PrivBayes~\cite{zhang2017privbayes} utilizes the Bayesian network to select low-way marginals to approximate a high-dimensional distribution. PrivMRF ~\cite{cai2021data} applies a Markov Random Field to model the data distribution, which enables flexible selection of low-way marginals. Without learning a graph structure, PrivSyn~\cite{zhang2021privsyn} greedily searches numerous low-way marginals to represent and synthesize the original dataset directly. Despite high utility with DP guarantee, these approaches cannot be directly extended to the vertical federated setting.

\noindent\textbf{Private vertical data synthesis.}
There are several works~\cite {mohammed2011anonymity,jiang2006secure,mohammed2013secure,tang2019differentially,jiang2023distributed} on the private data synthesis under vertical setting. Among those works, some are based on the privacy model of k-anonymity ~\cite{sweeney2002k}, which has been proven to be vulnerable to various privacy attacks~\cite{kifer2009attacks,wong2007minimality}. A few works~\cite{mohammed2013secure,tang2019differentially,jiang2023distributed} explore DP data synthesis under a vertical setting. For instance, \cite{mohammed2013secure} proposes a two-party DP data synthesis framework relying on a given taxonomy tree, which is designed for classification tasks. \cite{tang2019differentially} utilizes a latent tree model to capture the correlations among cross-party attributions. Besides, DP-WGAN is also adapted to the vertical setting~\cite{jiang2023distributed} to generate synthetic data. To the best of our knowledge, we are the first work adapt the marginal-based approach to the vertical setting. The empirical results have demonstrated the superiority.

\noindent\textbf{Vertical data analysis with DP.} Apart from data synthesis, there are also several works on the DP computing~\cite{he2022differentially,groce2019cheaper} and DP machine learning~\cite{li2023differentially,chen2020vafl,wu2020privacy,xie2022improving} under vertical setting. In particular, the work~\cite{groce2019cheaper} applies DP to protect the loads of hash table for achieving malicious-secure two-party private set intersection. Another work~\cite{xie2022improving} enables each data party to build a local feature extractor to output DP-sanitized feature embedding for realizing vertical deep learning with DP. A recent paper~\cite{li2023differentially} achieves DP vertical k-means by leveraging the inherent randomness of FM-sketch to protect the membership information of data points.

\section{conclusion}\label{sec:conclusion}
We have presented \textsf{VertiMRF}, a novel differentially private algorithm to generate synthetic data in the vertical federated setting. In particular, we applied DP FM-sketch to encode the local data of each party and estimate cross-party marginals. Based on the shared sketches and local MRFs constructed by local parties, the central server can build an MRF to represent global correlations without access to the raw data and violation of DP. Additionally, we also provided two techniques tailored for datasets with large attribute domain sizes. Finally, we empirically validated \textsf{VertiMRF} by conducting end-to-end comparisons and ablation studies.

{
\bibliographystyle{abbrv}
\bibliography{ref}

\begin{thebibliography}{10}

\bibitem{abadi2016deep}
M.~Abadi, A.~Chu, I.~Goodfellow, H.~B. McMahan, I.~Mironov, K.~Talwar, and
  L.~Zhang.
\newblock Deep learning with differential privacy.
\newblock In {\em Proceedings of the 2016 ACM SIGSAC conference on computer and
  communications security}, pages 308--318, 2016.

\bibitem{abay2019privacy}
N.~C. Abay, Y.~Zhou, M.~Kantarcioglu, B.~Thuraisingham, and L.~Sweeney.
\newblock Privacy preserving synthetic data release using deep learning.
\newblock In {\em Machine Learning and Knowledge Discovery in Databases:
  European Conference, ECML PKDD 2018, Dublin, Ireland, September 10--14, 2018,
  Proceedings, Part I 18}, pages 510--526. Springer, 2019.

\bibitem{asuncion2007uci}
A.~Asuncion and D.~Newman.
\newblock Uci machine learning repository, 2007.

\bibitem{aydore2021differentially}
S.~Aydore, W.~Brown, M.~Kearns, K.~Kenthapadi, L.~Melis, A.~Roth, and A.~A.
  Siva.
\newblock Differentially private query release through adaptive projection.
\newblock In {\em International Conference on Machine Learning}, pages
  457--467. PMLR, 2021.

\bibitem{beaulieu2019privacy}
B.~K. Beaulieu-Jones, Z.~S. Wu, C.~Williams, R.~Lee, S.~P. Bhavnani, J.~B.
  Byrd, and C.~S. Greene.
\newblock Privacy-preserving generative deep neural networks support clinical
  data sharing.
\newblock {\em Circulation: Cardiovascular Quality and Outcomes},
  12(7):e005122, 2019.

\bibitem{cai2021data}
K.~Cai, X.~Lei, J.~Wei, and X.~Xiao.
\newblock Data synthesis via differentially private markov random fields.
\newblock {\em Proceedings of the VLDB Endowment}, 14(11):2190--2202, 2021.

\bibitem{chen2020gs}
D.~Chen, T.~Orekondy, and M.~Fritz.
\newblock Gs-wgan: A gradient-sanitized approach for learning differentially
  private generators.
\newblock {\em Advances in Neural Information Processing Systems},
  33:12673--12684, 2020.

\bibitem{chen2017fast}
H.~Chen, K.~Laine, and P.~Rindal.
\newblock Fast private set intersection from homomorphic encryption.
\newblock In {\em Proceedings of the 2017 ACM SIGSAC Conference on Computer and
  Communications Security}, pages 1243--1255, 2017.

\bibitem{chen2020vafl}
T.~Chen, X.~Jin, Y.~Sun, and W.~Yin.
\newblock Vafl: a method of vertical asynchronous federated learning.
\newblock {\em arXiv preprint arXiv:2007.06081}, 2020.

\bibitem{chen2021secureboost+}
W.~Chen, G.~Ma, T.~Fan, Y.~Kang, Q.~Xu, and Q.~Yang.
\newblock Secureboost+: A high performance gradient boosting tree framework for
  large scale vertical federated learning.
\newblock {\em arXiv preprint arXiv:2110.10927}, 2021.

\bibitem{dickens2022order}
C.~Dickens, J.~Thaler, and D.~Ting.
\newblock Order-invariant cardinality estimators are differentially private.
\newblock {\em Advances in Neural Information Processing Systems},
  35:15204--15216, 2022.

\bibitem{dong2013private}
C.~Dong, L.~Chen, and Z.~Wen.
\newblock When private set intersection meets big data: an efficient and
  scalable protocol.
\newblock In {\em Proceedings of the 2013 ACM SIGSAC conference on Computer \&
  communications security}, pages 789--800, 2013.

\bibitem{dwork2006calibrating}
C.~Dwork, F.~McSherry, K.~Nissim, and A.~Smith.
\newblock Calibrating noise to sensitivity in private data analysis.
\newblock In {\em Theory of Cryptography: Third Theory of Cryptography
  Conference, TCC 2006, New York, NY, USA, March 4-7, 2006. Proceedings 3},
  pages 265--284. Springer, 2006.

\bibitem{frigerio2019differentially}
L.~Frigerio, A.~S. de~Oliveira, L.~Gomez, and P.~Duverger.
\newblock Differentially private generative adversarial networks for time
  series, continuous, and discrete open data.
\newblock In {\em ICT Systems Security and Privacy Protection: 34th IFIP TC 11
  International Conference, SEC 2019, Lisbon, Portugal, June 25-27, 2019,
  Proceedings 34}, pages 151--164. Springer, 2019.

\bibitem{gaboardi2014dual}
M.~Gaboardi, E.~J.~G. Arias, J.~Hsu, A.~Roth, and Z.~S. Wu.
\newblock Dual query: Practical private query release for high dimensional
  data.
\newblock In {\em International Conference on Machine Learning}, pages
  1170--1178. PMLR, 2014.

\bibitem{ge2020kamino}
C.~Ge, S.~Mohapatra, X.~He, and I.~F. Ilyas.
\newblock Kamino: Constraint-aware differentially private data synthesis.
\newblock {\em arXiv preprint arXiv:2012.15713}, 2020.

\bibitem{groce2019cheaper}
A.~Groce, P.~Rindal, and M.~Rosulek.
\newblock Cheaper private set intersection via differentially private leakage.
\newblock {\em Proceedings on Privacy Enhancing Technologies}, 2019(3), 2019.

\bibitem{hardt2012simple}
M.~Hardt, K.~Ligett, and F.~McSherry.
\newblock A simple and practical algorithm for differentially private data
  release.
\newblock {\em Advances in neural information processing systems}, 25, 2012.

\bibitem{hardy2017private}
S.~Hardy, W.~Henecka, H.~Ivey-Law, R.~Nock, G.~Patrini, G.~Smith, and
  B.~Thorne.
\newblock Private federated learning on vertically partitioned data via entity
  resolution and additively homomorphic encryption.
\newblock {\em arXiv preprint arXiv:1711.10677}, 2017.

\bibitem{he2022differentially}
Y.~He, X.~Tan, J.~Ni, L.~T. Yang, and X.~Deng.
\newblock Differentially private set intersection for asymmetrical id
  alignment.
\newblock {\em IEEE Transactions on Information Forensics and Security},
  17:3479--3494, 2022.

\bibitem{hu2019fdml}
Y.~Hu, D.~Niu, J.~Yang, and S.~Zhou.
\newblock {FDML}: A collaborative machine learning framework for distributed
  features.
\newblock In {\em Proceedings of the 25th ACM SIGKDD International Conference
  on Knowledge Discovery \& Data Mining}, KDD '19, page 2232–2240, New York,
  NY, USA, 2019. Association for Computing Machinery.

\bibitem{hu2023sok}
Y.~Hu, F.~Wu, Q.~Li, Y.~Long, G.~Garrido, C.~Ge, B.~Ding, D.~Forsyth, B.~Li,
  and D.~Song.
\newblock Sok: Privacy-preserving data synthesis.
\newblock In {\em Proc. IEEE S\&P}, pages 2--2, 2023.

\bibitem{jiang2006secure}
W.~Jiang and C.~Clifton.
\newblock A secure distributed framework for achieving k-anonymity.
\newblock {\em The VLDB journal}, 15:316--333, 2006.

\bibitem{jiang2023distributed}
X.~Jiang, Y.~Zhang, X.~Zhou, and J.~Grossklags.
\newblock Distributed gan-based privacy-preserving publication of
  vertically-partitioned data.
\newblock {\em Proceedings on Privacy Enhancing Technologies}, 2:236--250,
  2023.

\bibitem{jordon2018pate}
J.~Jordon, J.~Yoon, and M.~Van Der~Schaar.
\newblock Pate-gan: Generating synthetic data with differential privacy
  guarantees.
\newblock In {\em International conference on learning representations}, 2018.

\bibitem{kairouz2021advances}
P.~Kairouz, H.~B. McMahan, B.~Avent, A.~Bellet, M.~Bennis, A.~N. Bhagoji,
  K.~Bonawitz, Z.~Charles, G.~Cormode, R.~Cummings, et~al.
\newblock Advances and open problems in federated learning.
\newblock {\em Foundations and Trends{\textregistered} in Machine Learning},
  14(1--2):1--210, 2021.

\bibitem{kifer2009attacks}
D.~Kifer.
\newblock Attacks on privacy and definetti's theorem.
\newblock In {\em Proceedings of the 2009 ACM SIGMOD International Conference
  on Management of data}, pages 127--138, 2009.

\bibitem{kolesnikov2016efficient}
V.~Kolesnikov, R.~Kumaresan, M.~Rosulek, and N.~Trieu.
\newblock Efficient batched oblivious prf with applications to private set
  intersection.
\newblock In {\em Proceedings of the 2016 ACM SIGSAC Conference on Computer and
  Communications Security}, pages 818--829, 2016.

\bibitem{krawczyk2005hmqv}
H.~Krawczyk.
\newblock Hmqv: A high-performance secure diffie-hellman protocol.
\newblock In {\em Annual international cryptology conference}, pages 546--566.
  Springer, 2005.

\bibitem{li2014differentially}
H.~Li, L.~Xiong, and X.~Jiang.
\newblock Differentially private synthesization of multi-dimensional data using
  copula functions.
\newblock In {\em Advances in database technology: proceedings. International
  conference on extending database technology}, volume 2014, page 475. NIH
  Public Access, 2014.

\bibitem{li2014dpsynthesizer}
H.~Li, L.~Xiong, L.~Zhang, and X.~Jiang.
\newblock Dpsynthesizer: differentially private data synthesizer for privacy
  preserving data sharing.
\newblock In {\em Proceedings of the VLDB Endowment International Conference on
  Very Large Data Bases}, volume~7, page 1677. NIH Public Access, 2014.

\bibitem{li2017differential}
N.~Li, M.~Lyu, D.~Su, and W.~Yang.
\newblock {\em Differential privacy: From theory to practice}.
\newblock Springer, 2017.

\bibitem{li2021dpsyn}
N.~Li, Z.~Zhang, and T.~Wang.
\newblock Dpsyn: Experiences in the nist differential privacy data synthesis
  challenges.
\newblock {\em arXiv preprint arXiv:2106.12949}, 2021.

\bibitem{li2024performance}
Z.~Li, B.~Ding, L.~Yao, Y.~Li, X.~Xiao, and J.~Zhou.
\newblock Performance-based pricing for federated learning via auction.
\newblock {\em Proc. VLDB Endowment}, 17(6):1269--1282, 2024.

\bibitem{li2022vldb}
Z.~Li, B.~Ding, C.~Zhang, N.~Li, and J.~Zhou.
\newblock Federated matrix factorization with privacy guarantee.
\newblock {\em Proc. VLDB Endow.}, 15(4):900–913, dec 2021.

\bibitem{li2023differentially}
Z.~Li, T.~Wang, and N.~Li.
\newblock Differentially private vertical federated clustering.
\newblock {\em Proceedings of the VLDB Endowment}, 16(6):1277--1290, 2023.

\bibitem{liu2019communication}
Y.~Liu, Y.~Kang, X.~Zhang, L.~Li, Y.~Cheng, T.~Chen, M.~Hong, and Q.~Yang.
\newblock A communication efficient collaborative learning framework for
  distributed features.
\newblock {\em arXiv preprint arXiv:1912.11187}, 2019.

\bibitem{liu2020federated-forest}
Y.~Liu, Y.~Liu, Z.~Liu, Y.~Liang, C.~Meng, J.~Zhang, and Y.~Zheng.
\newblock Federated forest.
\newblock {\em IEEE Transactions on Big Data}, (01):1--1, 2020.

\bibitem{manton2010national}
K.~G. Manton.
\newblock National long-term care survey: 1982, 1984, 1989, 1994, 1999, and
  2004.
\newblock {\em Inter-university Consortium for Political and Social Research},
  2010.

\bibitem{mckenna2021winning}
R.~McKenna, G.~Miklau, and D.~Sheldon.
\newblock Winning the nist contest: A scalable and general approach to
  differentially private synthetic data.
\newblock {\em arXiv preprint arXiv:2108.04978}, 2021.

\bibitem{mckenna2019graphical}
R.~McKenna, D.~Sheldon, and G.~Miklau.
\newblock Graphical-model based estimation and inference for differential
  privacy.
\newblock In {\em International Conference on Machine Learning}, pages
  4435--4444. PMLR, 2019.

\bibitem{mcmahan2017dpfedavg}
B.~McMahan, E.~Moore, D.~Ramage, S.~Hampson, and B.~A.~y. Arcas.
\newblock {Communication-Efficient Learning of Deep Networks from Decentralized
  Data}.
\newblock In A.~Singh and J.~Zhu, editors, {\em Proceedings of the 20th
  International Conference on Artificial Intelligence and Statistics},
  volume~54 of {\em Proceedings of Machine Learning Research}, pages
  1273--1282, USA, 20--22 Apr 2017. PMLR.

\bibitem{mcmahan2018dp-rnn}
H.~B. McMahan, D.~Ramage, K.~Talwar, and L.~Zhang.
\newblock Learning differentially private recurrent language models.
\newblock In {\em International Conference on Learning Representations}.
  OpenReview.net, 2018.

\bibitem{mironov2017renyi}
I.~Mironov.
\newblock R{\'e}nyi differential privacy.
\newblock In {\em 2017 IEEE 30th computer security foundations symposium
  (CSF)}, pages 263--275. IEEE, 2017.

\bibitem{mohammed2013secure}
N.~Mohammed, D.~Alhadidi, B.~C. Fung, and M.~Debbabi.
\newblock Secure two-party differentially private data release for vertically
  partitioned data.
\newblock {\em IEEE transactions on dependable and secure computing},
  11(1):59--71, 2013.

\bibitem{mohammed2011anonymity}
N.~Mohammed, B.~C. Fung, and M.~Debbabi.
\newblock Anonymity meets game theory: secure data integration with malicious
  participants.
\newblock {\em The VLDB Journal}, 20:567--588, 2011.

\bibitem{pardau2018california}
S.~L. Pardau.
\newblock The california consumer privacy act: Towards a european-style privacy
  regime in the united states.
\newblock {\em J. Tech. L. \& Pol'y}, 23:68, 2018.

\bibitem{ren2022ldp}
X.~Ren, L.~Shi, W.~Yu, S.~Yang, C.~Zhao, and Z.~Xu.
\newblock Ldp-ids: Local differential privacy for infinite data streams.
\newblock In {\em Proc. SIGMOD}, pages 1064--1077, 2022.

\bibitem{ren2024belt}
X.~Ren, S.~Yang, C.~Zhao, J.~McCann, and Z.~Xu.
\newblock Belt and brace: When federated learning meets differential privacy.
\newblock {\em arXiv preprint arXiv:2404.18814}, 2024.

\bibitem{ren2018lopub}
X.~Ren, C.-M. Yu, W.~Yu, S.~Yang, X.~Yang, J.~A. McCann, and S.~Y. Philip.
\newblock Lopub: high-dimensional crowdsourced data publication with local
  differential privacy.
\newblock {\em IEEE Transactions on Information Forensics and Security},
  13(9):2151--2166, 2018.

\bibitem{ridgeway2021challenge}
D.~Ridgeway, M.~F. Theofanos, T.~W. Manley, and C.~Task.
\newblock Challenge design and lessons learned from the 2018 differential
  privacy challenges.
\newblock Technical report, Technical Report NIST Technical Note 2151, National
  Institute of Standards~…, 2021.

\bibitem{steven2015ipums}
S.~Ruggles, K.~Genadek, G.~Ronald, G.~Josiah, and M.~Sobek.
\newblock Ipums usa: Version 6.0.
\newblock Technical report, Minneapolis: University of Minnesota, 2015.

\bibitem{smith2020flajolet}
A.~Smith, S.~Song, and A.~Guha~Thakurta.
\newblock The flajolet-martin sketch itself preserves differential privacy:
  Private counting with minimal space.
\newblock {\em Advances in Neural Information Processing Systems},
  33:19561--19572, 2020.

\bibitem{su2016differentially}
S.~Su, P.~Tang, X.~Cheng, R.~Chen, and Z.~Wu.
\newblock Differentially private multi-party high-dimensional data publishing.
\newblock In {\em Proc. IEEE ICDE}, pages 205--216, 2016.

\bibitem{sweeney2002k}
L.~Sweeney.
\newblock k-anonymity: A model for protecting privacy.
\newblock {\em International journal of uncertainty, fuzziness and
  knowledge-based systems}, 10(05):557--570, 2002.

\bibitem{tang2019differentially}
P.~Tang, X.~Cheng, S.~Su, R.~Chen, and H.~Shao.
\newblock Differentially private publication of vertically partitioned data.
\newblock {\em IEEE transactions on dependable and secure computing},
  18(2):780--795, 2019.

\bibitem{vietri2020new}
G.~Vietri, G.~Tian, M.~Bun, T.~Steinke, and S.~Wu.
\newblock New oracle-efficient algorithms for private synthetic data release.
\newblock In {\em International Conference on Machine Learning}, pages
  9765--9774. PMLR, 2020.

\bibitem{voigt2017eu}
P.~Voigt and A.~Von~dem Bussche.
\newblock The eu general data protection regulation (gdpr).
\newblock {\em A Practical Guide, 1st Ed., Cham: Springer International
  Publishing}, 10(3152676):10--5555, 2017.

\bibitem{wang2019answering}
T.~Wang, B.~Ding, J.~Zhou, C.~Hong, Z.~Huang, N.~Li, and S.~Jha.
\newblock Answering multi-dimensional analytical queries under local
  differential privacy.
\newblock In {\em Proceedings of the 2019 International Conference on
  Management of Data}, pages 159--176, 2019.

\bibitem{wang2019locally}
T.~Wang, X.~Yang, X.~Ren, W.~Yu, and S.~Yang.
\newblock Locally private high-dimensional crowdsourced data release based on
  copula functions.
\newblock {\em IEEE Transactions on Services Computing}, 15(2):778--792, 2019.

\bibitem{webank}
WeBank.
\newblock Webank use case.
\newblock
  \url{https://www.fedai.org/cases/a-case-of-traffic-violations-insurance-using-federated-learning/},
  2022.

\bibitem{wong2007minimality}
R.~C.-W. Wong, A.~W.-C. Fu, K.~Wang, and J.~Pei.
\newblock Minimality attack in privacy preserving data publishing.
\newblock In {\em Proceedings of the 33rd international conference on Very
  large data bases}, pages 543--554, 2007.

\bibitem{wu2020privacy}
Y.~Wu, S.~Cai, X.~Xiao, G.~Chen, and B.~C. Ooi.
\newblock Privacy preserving vertical federated learning for tree-based models.
\newblock {\em arXiv preprint arXiv:2008.06170}, 2020.

\bibitem{wu2023vertibench}
Z.~Wu, J.~Hou, and B.~He.
\newblock Vertibench: Advancing feature distribution diversity in vertical
  federated learning benchmarks.
\newblock {\em arXiv preprint arXiv:2307.02040}, 2023.

\bibitem{xie2022improving}
C.~Xie, P.-Y. Chen, C.~Zhang, and B.~Li.
\newblock Improving privacy-preserving vertical federated learning by efficient
  communication with admm.
\newblock {\em arXiv preprint arXiv:2207.10226}, 2022.

\bibitem{xie2023federatedscope}
Y.~Xie, Z.~Wang, D.~Gao, D.~Chen, L.~Yao, W.~Kuang, Y.~Li, B.~Ding, and
  J.~Zhou.
\newblock Federatedscope: A flexible federated learning platform for
  heterogeneity.
\newblock {\em Proc. VLDB Endow.}, 16(5):1059--1072, 2023.

\bibitem{zhang2017privbayes}
J.~Zhang, G.~Cormode, C.~M. Procopiuc, D.~Srivastava, and X.~Xiao.
\newblock Privbayes: Private data release via bayesian networks.
\newblock {\em ACM Transactions on Database Systems (TODS)}, 42(4):1--41, 2017.

\bibitem{zhang2018differentially}
X.~Zhang, S.~Ji, and T.~Wang.
\newblock Differentially private releasing via deep generative model (technical
  report).
\newblock {\em arXiv preprint arXiv:1801.01594}, 2018.

\bibitem{zhang2020additively}
Y.~Zhang and H.~Zhu.
\newblock Additively homomorphical encryption based deep neural network for
  asymmetrically collaborative machine learning.
\newblock {\em arXiv preprint arXiv:2007.06849}, 2020.

\bibitem{zhang2018calm}
Z.~Zhang, T.~Wang, N.~Li, S.~He, and J.~Chen.
\newblock Calm: Consistent adaptive local marginal for marginal release under
  local differential privacy.
\newblock In {\em Proc. ACM CCS}, pages 212--229, 2018.

\bibitem{zhang2021privsyn}
Z.~Zhang, T.~Wang, N.~Li, J.~Honorio, M.~Backes, S.~He, J.~Chen, and Y.~Zhang.
\newblock $\{$PrivSyn$\}$: Differentially private data synthesis.
\newblock In {\em 30th USENIX Security Symposium (USENIX Security 21)}, pages
  929--946, 2021.

\bibitem{zhao2023fullversion}
F.~Zhao, Z.~Li, X.~Ren, B.~Ding, S.~Yang, and Y.~Li.
\newblock Dp-vertical-data-synthesis.
\newblock
  \url{https://anonymous.4open.science/r/DP-vertical-data-synthesis-8871/},
  2023.

\bibitem{zhao2020latent}
F.~Zhao, X.~Ren, S.~Yang, Q.~Han, P.~Zhao, and X.~Yang.
\newblock Latent dirichlet allocation model training with differential privacy.
\newblock {\em IEEE Transactions on Information Forensics and Security},
  16:1290--1305, 2020.

\end{thebibliography}
}
\appendix
\section{Appendix}
\subsection{Additional experimental results}
\label{sec:add_exp}

\noindent\underline{\textbf{Effect of \texttt{LocMRF}.}}
Data parties generate local \texttt{MRF}s to help infer the intra-party marginals. However, simply using \texttt{LocEnc} and \texttt{CarEst} can also achieve the marginal estimation. In Figure~\ref{fig:locMRF}, we compare TVD on NLTCS when synthesizing data with and without using \texttt{LocMRF} respectively. Specifically, different \texttt{LocEnc} methods are considered in \texttt{LocMRF}, which are labeled as \texttt{"FM}+\texttt{LocMRF"} and \texttt{"FO}+\texttt{LocMRF"}, respectively. 
As shown, for both \texttt{"FM}+\texttt{LocMRF"} and \texttt{"FO}+\texttt{LocMRF"}, the TVD results are smaller than those when simply using the sketch or FO-based \texttt{LocEnc}. This demonstrates the effectiveness of \texttt{LocMRF}. Furthermore, we also find that using \texttt{LocMRF} can dramatically reduce the variances of the generated TVD results. This is reasonable since \texttt{LocMRF} can capture the correlations among local attributes, thereby reducing the uncertainty when 
estimating the intra-party marginals. 

\noindent\underline{\textbf{Effect of histogram recovery.}} 
We demonstrate the effectiveness of our proposed histogram recovery (\texttt{HisRec}) by comparing it to the baseline that generates high-dimensional histogram via uniformly allocating the count in each cell of estimated low-dimensional histogram (denoted as \texttt{UniSam}) to the corresponding multiple cells of the high-dimensional histogram. 
For a fair comparison, the privacy budget for sanitizing the value distributions in \texttt{HisRec} is allocated to \texttt{LocEnc} in \texttt{UniSam}.
As shown in Figure~\ref{fig:adult_uni}, \texttt{FM+HisRec} yields superior TVD results compared to \texttt{FM+UniSam}, which demonstrates the effectiveness of \texttt{HisRec} when being used in conjunction with FM sketch-based \texttt{LocEnc} and \texttt{CarEst} approaches. However, we also find that \texttt{FO+HisRec} performs closely to \texttt{FO+UniSam} when the privacy budget $\epsilon<3.2$ and even worse when $\epsilon=3.2$, that's mainly because the low-dimensional histogram estimated by FO-based \texttt{CarEst} is too noisy and the noise has dominated the estimation error. Then without prior knowledge of true distribution, uniform allocating shows an advantage as an impartial method.


\noindent\underline{\textbf{Effect of consistency enforcement.}} As discussed in Section~\ref{subsec:consis}, we propose to improve marginal estimation by ensuring consistency among the histograms estimated by \texttt{CarEst} and local \texttt{MRF}s. In Figure~\ref{fig:adult_cons}, we compare TVD on the synthetic Adult dataset obtained with (denoted as \texttt{+Consis}) and without enforcing consistency. As shown, the TVD values under \texttt{+Consis} settings are consistently smaller than those without ensuring consistency, demonstrating the effectiveness of consistency enforcement. Furthermore, the TVD results become closer within a lower privacy regime (larger budget). This is due to the improved accuracy of both \texttt{CarEst} and local \texttt{MRF}s with the increased privacy budget, reducing the advantage of the consistency enforcement procedure. Therefore, enforcing consistency becomes more necessary in a higher privacy regime.

\begin{figure*}[t]
	\centering
	\includegraphics[width=0.98\textwidth]{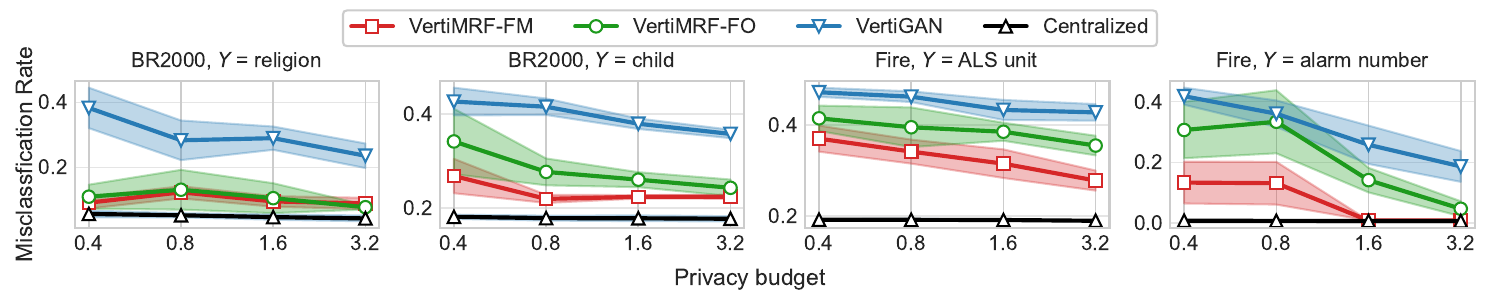}
    \caption{SVM misclassification rate vs. privacy budget $\epsilon$ on BR2000 and Fire datasets. \label{fig:br2000_svm_accuracy}.}
\end{figure*}

\begin{figure}[t]
	\centering
	\includegraphics[width=0.48\textwidth]{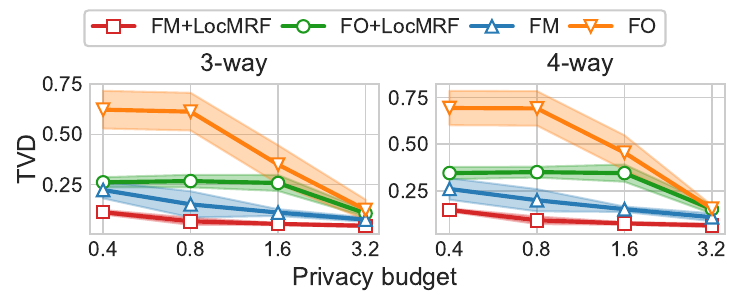}
    \caption{Effect of \texttt{LocMRF} on NLTCS. \label{fig:locMRF}}
\end{figure}

\begin{figure}[t]
	\centering
	\includegraphics[width=0.48\textwidth]{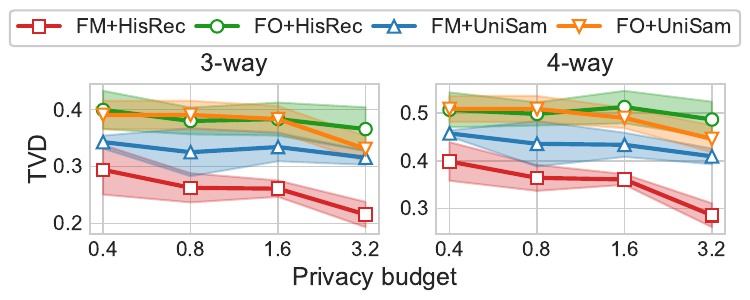}
    \caption{Effect of histogram recovery (\texttt{HisRec}) on Adult. \label{fig:adult_uni}}
\end{figure}

\begin{figure}[t]
	\centering
	\includegraphics[width=0.48\textwidth]{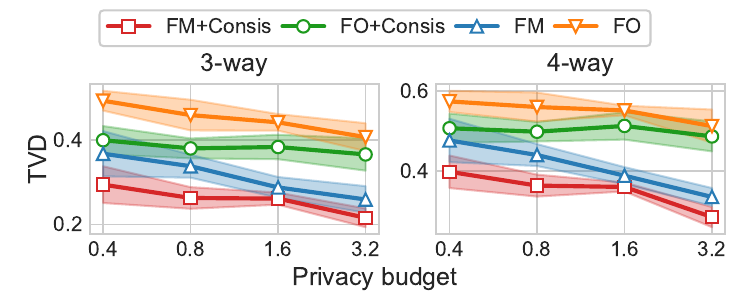}
    \caption{Effect of enforcing consistency on Adult. \label{fig:adult_cons}}
\end{figure}

\subsection{proof of Theorem~\ref{the:rr-error}}\label{proof:the3}
\begin{proof}
First of all, we should notice that GRR achieves unbounded DP~\cite{li2017differential} which considers the neighboring dataset by replacing one record. It has been shown that any algorithm satisfying $\epsilon$ unbounded DP also satisfies $2\epsilon$ bounded DP, where bounded DP considers neighboring datasets obtained by adding or removing a single record. In this paper, we consider bounded DP for consistency. Thus, given $\epsilon'$, each perturbation in Equation~(\ref{equ:grr}) should satisfy $\frac{\epsilon'}{2}$-DP. The privacy guarantee of the FO-based \texttt{LocEnc} procedure can be obtained by applying the sequential composition of DP, resulting in an overall privacy guarantee of $d\epsilon'/2$, where $d$ represents the number of attributes in each record. However, Lemma~\ref{lem:com_rdp} demonstrates that RDP provides an alternative bound for the composition of multiple DP algorithms, that is $(4\frac{\epsilon'}{2}\sqrt{2d\log(1/\delta)},\delta)$-DP, where $0<\delta<1$ and $\log(1/\delta)\geq n(\frac{\epsilon'}{2})^2$. To obtain the tighter bound, we take the minimum between the two bounds, as stated in the theorem. The variance bound can be directly obtained from proposition 10 of~\cite{wang2019answering}.
\end{proof}

\subsection{Proof of Theorem~\ref{the:fm_priv}}
\label{proof:the4}

\begin{proof}
Let $D$ and $D'$ be neighboring datasets satisfying  $D\nabla D' = X_{id} = \left\{v_{id}^{{1}},..., v_{id}^{{d}} \right\}$, where $id$ denotes the record- index of $X_{id}$,  $v_{id}^{{j}}$ is the corresponding attribute value of $A^j$. Let $f$ be the sketch-based \texttt{LocEnc} algorithm which maps $t$ hash keys and input dataset to $t$ set of sketch tuples $$\left\{\mathcal{M}^{(h)}\triangleq\left\{\mathcal{M}^{(h)}_j\triangleq\left(\alpha^{(h)}_{v_1^j}, ..., \alpha^{(h)}_{v_{u_j}^j}\right)\vert j\in [d]\right\}\vert h\in [t]\right\}.$$ where $\alpha^{(h)}_{v_i^j}$ denotes the sketch for $A^j$ taking value $v_i^j$ generated by the hash key $\xi_{h}$.

We first calculate the privacy cost when applying a hash key $\xi_h$ to the overall input dataset and returning sketch tuples $\mathcal{M}^{(h)}$.  $\mathcal{M}^{(h)}$ has $d$ sketch tuples and $\sum_{i=1}^{d}u_i$ sketches in total. Since $X_{id}$ can only take one value $v_{id}^j$ of each attribute $A^j$, then there should also be one sketch $\alpha_{v_{id}^j}^{(h)}$ in $\mathcal{M}_{j}^{(h)}$ may be different for $D$ and $D'$. Therefore, according to the definition of RDP, it holds that 

\begin{align}\label{equ:total_pri}
&\quad\exp\left[(\lambda -1)D_{\lambda}\left(f(D, \xi_h)\vert f(D', \xi_h)\right)\right]\\
&= \sum_{\mathcal{M}^{(h)}}{Pr[\mathcal{M}^{(h)}]}^{\lambda}{Pr^{'}[\mathcal{M}^{(h)}]}^{1-\lambda}\\
&=\sum_{\alpha_{v_{id}^1}^{(h)}=0}^{\infty}...\sum_{\alpha_{v_{id}^d}^{(h)}=0}^{\infty}\{{[Pr[\alpha_{v_{id}^1}^{(h)}]\prod_{1<i\leq d} {Pr[\alpha_{v_{id}^i}^{(h)}|\{\alpha_{v_{id}^t}^{(h)},t<i\}]}]}^{\lambda}\cdot\\
&\quad [Pr^{\prime}[\alpha_{v_{id}^1}^{(h)}]\prod_{1<i\leq d}{Pr^{\prime}}[\alpha_{v_{id}^i}^{(h)}|\{\alpha_{v_{id}^t}^{(h)},t<i\}]]^{1-\lambda}\}\cdot\\
&\quad\underbrace{\sum_{\mathcal{M}_{\neg}^{(h)}}[Pr[\mathcal{M}_{\neg}^{(h)}|\vec{\alpha}]]^{\lambda}[Pr^{\prime}[\mathcal{M}_{\neg}^{(h)}|\vec{\alpha}]]^{1-\lambda}}_{=1}
\end{align}
where the second equality follows the joint distribution formula, $\vec{\alpha} \triangleq \{\alpha_{v_{id}^i}^{(h)}, 1\leq i\leq d\}$ and $\mathcal{M}_{\neg}^{(h)}$ denotes other sketches in $\mathcal{M}^{(h)}$ besides $\vec{\alpha}$. 



Now consider term $Pr\left[\alpha_{v_{id}^i}^{(h)}|\left\{\alpha_{v_{id}^t}^{(h)},t<i\right\}\right]$, there are two cases: 
\begin{itemize}[leftmargin = *]
\item  $\forall t, \alpha_{v_{id}^i}^{(h)} \neq \alpha_{v_{id}^t}^{(h)}$. In such case, since $\mathcal{H}_{\xi}$ map distinct elements to independent variables and the $k_p$ phantom elements are independently sampled, then $\alpha_{v_{id}^i}^{(h)}$ is independent of $\alpha_{v_{id}^t}^{(h)}, \forall t$. That indicates $$Pr\left[\alpha_{v_{id}^i}^{(h)}|\left\{\alpha_{v_{id}^t}^{(h)},t<i\right\}\right] = Pr\left[\alpha_{v_{id}^i}^{(h)}\right].$$
\item $\exists t, s.t., \alpha_{v_{id}^i}^{(h)} = \alpha_{v_{id}^t}^{(h)}$. In such case, it should hold that $\alpha_{v_{id}^i}^{(h)}\geq\mathcal{H}_{\xi_{h}}(id)$. That indicates 
\begin{align*}
Pr\left[\alpha_{v_{id}^i}^{(h)}|\left\{\alpha_{v_{id}^t}^{(h)},t<i\right\}\right]
= Pr^{\prime}\left[\alpha_{v_{id}^i}^{(h)}|\left\{\alpha_{v_{id}^t}^{(h)},t<i\right\}\right].
\end{align*}
The left side of above Equation is the probability that $\alpha_{v_{id}^i}^{(h)}$ is the maximal among all elements in the set of hashed record ids and sampled geometric random variables on $D$. Since we have known that $\mathcal{H}_{\xi_{h}}(id)$ is not or not the only one maximal element in the set, then we can just consider other hashed ids and sampled variables. The ids are same for $D$ and $D^{\prime}$ and each of the variables are i.i.d sampled from the same distribution, which can easily derive the equality of above equation.
\end{itemize}

W.l.o.g., we assume there are $s$ terms $\{Pr\left[\alpha_{v_{id}^j}^{(h)}|\left\{\alpha_{v_{id}^t}^{(h)},t<j\right\}\right], (d-s+1)\leq j\leq d\}$ satisfying the second case. Then, Equation~(\ref{equ:total_pri}) can be bounded by:
\begin{align}
&\quad\exp\left[(\lambda -1)D_{\lambda}\left(f(D, \xi_h)\vert f(D', \xi_h)\right)\right]\\
&= \sum_{\alpha_{v_{id}^1}^{(h)}=0}^{\infty}...\sum_{\alpha_{v_{id}^{d-s}}^{(h)}=0}^{\infty} {\left[\prod_{i=1}^{d-s} {Pr[\alpha_{v_{id}^i}^{(h)}]}\right]}^{\lambda}\cdot {\left[\prod_{i=1}^{d-s} {Pr^{\prime}[\alpha_{v_{id}^i}^{(h)}]}\right]}^{1-\lambda}\\
&= \prod_{i=1}^{d-s}\underbrace{\left\{\sum_{\alpha_{v_{id}^i}^{(h)}=0}^{\infty}{\left[{Pr[\alpha_{v_{id}^i}^{(h)}]}\right]}^{\lambda}{\left[{Pr^{\prime}[\alpha_{v_{id}^i}^{(h)}]}\right]}^{1-\lambda}\right\}}_{term (i)}
\end{align}
Lemma~\ref{lemma:dp-fm} demonstrates a statistical bound of $\epsilon^{'}$ under DP framework. According to the definition of RDP and the translation with DP, it holds that $term (i) \leq \exp{[(\lambda -1)(2\lambda(\epsilon^{\prime})^2)]}$. Then we can derive that 
\begin{align}
&\quad\exp\left[(\lambda -1)D_{\lambda}\left(f(D, \xi_h)\vert f(D', \xi_h)\right)\right]\\
&\leq \exp{[(\lambda -1)(2(d-s)\lambda(\epsilon^{\prime})^2)]}\\
&\leq \exp{[(\lambda -1)(2d\lambda(\epsilon^{\prime})^2)]}
\end{align}

So far, we have proved that applying one hash key to map the overall input data satisfies $(\lambda, 2d\lambda(\epsilon^{\prime})^2)$-RDP in a single run. Next, according to the sequential composition theorem of RDP~\cite{mironov2017renyi}, \texttt{LocEnc} algorithm involving $t$ runs of the FM sketch generation process should satisfy $(\lambda, 2td\lambda(\epsilon^{\prime})^2)$-RDP, which can be further translated to $(4\epsilon\sqrt{td\log(1/\delta)}, \delta)$-DP, $\forall \delta<1$ if setting $\alpha\geq 2$.
\end{proof}

\subsection{Proof of Theorem~\ref{the:fm_error}}\label{proof:the5}
Our proof is based on a lemma that bounds the error of the cardinality of a multi-set estimated by DP FM sketching algorithm shown in Algorithm~\ref{alg:dpfm}.

\begin{lemma}~\label{lem:uti-dpfm}
Let $k_{FM}$ be the estimated cardinality by Algorithm~\ref{alg:dpfm} with inputs $\gamma, \epsilon, \delta,\beta \in \left(0, 1\right)$, using $t=\frac{100\sqrt{\log{(1/\beta)}}}{\gamma^{2}}$ repeats, then for each multi-set $\mathcal{X} \subset u$, it holds that 
\begin{align}
\frac{\vert \mathcal{X}\vert}{1+\gamma}-O\leq k_{FM} \leq \left(1+\gamma \right)\cdot\vert\mathcal{X}\vert+C
\end{align}
with probability at least $1-\beta$, where $C = O(\frac{\log^{1/2}(1/\delta)\log^{1/4}(1/\beta)}{\epsilon})$.
\end{lemma}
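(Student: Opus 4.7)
The plan is to decompose $k_{FM}$ according to the three quantities whose max is taken in Algorithm~\ref{alg:dpfm}---the data sketch $\alpha_{\mathcal{X}}$, the phantom-element sketch $\alpha_p$, and the deterministic floor $\alpha_{\min}$---and separately bound the error each contributes. Across the $t$ independent runs (each with a fresh hash key and fresh phantom sample), I would aggregate by the standard FM-style estimator, for instance the harmonic mean or $1/e$-quantile of $(1+\gamma)^{\alpha}$, which is well known to remove the bias of the raw FM statistic.

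First, for the pure-FM part, I would use the geometric-tail analysis to show that $\Pr\bigl[(1+\gamma)^{\alpha_{\mathcal{X}}} \notin [|\mathcal{X}|/(1+\gamma),\, (1+\gamma)|\mathcal{X}|]\bigr]$ is bounded by a constant strictly less than $1$ in a single run, then amplify to probability at least $1-\beta/2$ using the specified $t = 100\sqrt{\log(1/\beta)}/\gamma^2$ repetitions. The exponent $1/2$ on $\log(1/\beta)$ (rather than $1$) indicates that the argument goes through a second-moment bound on the FM estimator---whose variance is $O(\gamma \cdot |\mathcal{X}|^2)$---combined with a median-of-means wrapper, which is the standard FM amplification trick and explains the $1/\gamma^2$ factor in $t$.

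Second, I would bound the two DP artifacts. By definition $(1+\gamma)^{\alpha_{\min}} \le (1+\gamma)/(1-e^{-\epsilon^{\prime}}) = O(1/\epsilon^{\prime})$. For the phantom sketch, $\alpha_p$ is the max of $k_p = \lceil 1/(e^{\epsilon^{\prime}}-1)\rceil = O(1/\epsilon^{\prime})$ i.i.d.\ geometrics, so a geometric-tail argument together with a union bound over the $t$ runs yields $(1+\gamma)^{\alpha_p} = O\bigl(k_p \cdot \log(t/\beta)/\gamma\bigr)$ with probability at least $1-\beta/2$. Whenever $\alpha_p$ or $\alpha_{\min}$ exceeds $\alpha_{\mathcal{X}}$, the extra term is thus $O(1/\epsilon^{\prime})$ up to logarithmic factors, which is absorbed into $C$. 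The lower bound follows immediately from monotonicity of $\max$, since $\alpha \ge \alpha_{\mathcal{X}}$ implies $(1+\gamma)^{\alpha} \ge (1+\gamma)^{\alpha_{\mathcal{X}}} \ge |\mathcal{X}|/(1+\gamma)$ on the same high-probability event; the ``$-O$'' on the left-hand side of the lemma accommodates at most an additive constant from the phantom subtraction.

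Finally, I would translate the per-run budget $\epsilon^{\prime}$ into $(\epsilon, \delta)$ via advanced composition of $t$ DPFM calls, giving $\epsilon^{\prime} = \Theta\bigl(\epsilon / \sqrt{t \log(1/\delta)}\bigr)$, hence
$1/\epsilon^{\prime} = O\bigl(\sqrt{t \log(1/\delta)}/\epsilon\bigr) = O\bigl(\log^{1/4}(1/\beta)\,\log^{1/2}(1/\delta)/(\gamma\epsilon)\bigr)$,
which matches the stated $C$ once $\gamma$ is absorbed into the hidden constant. The principal obstacle is the aggregation step: the $(1+\gamma)^{\alpha}$ values are heavy-tailed, so tight concentration for the harmonic-mean/quantile estimator across the $t$ runs requires a careful variance calculation, and one must union-bound the phantom contribution uniformly over all runs so that the resulting $\log t$ factor folds into the stated $\log(1/\beta)^{1/4}$ without dominating.
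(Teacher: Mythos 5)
The paper never actually proves this lemma: it is stated in the appendix as an imported tool (it is essentially the utility guarantee for the private FM sketch from \cite{smith2020flajolet}), and the proof environment that follows it in the source is the proof of Theorem~\ref{the:fm_error}, which merely \emph{invokes} the lemma. So there is no in-paper argument to compare yours against line by line; the right benchmark is the cited source, and your outline does follow its strategy: view the $k_p$ phantom geometrics as $k_p$ extra distinct elements, so that the combined maximum is distributed exactly as an honest FM sketch of a set of size $\vert\mathcal{X}\vert+k_p$, apply the multiplicative FM guarantee to that augmented set, subtract the phantom count (as Algorithm~\ref{alg:fm_CarEst} does), and absorb $k_p=O(1/\epsilon^{\prime})$ together with the floor $(1+\gamma)^{\alpha_{min}}=O(1/\epsilon^{\prime})$ into the additive term. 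Your chain $1/\epsilon^{\prime}=O(\sqrt{t\log(1/\delta)}/\epsilon)$ with $t=100\sqrt{\log(1/\beta)}/\gamma^{2}$ correctly reproduces $C=O(\log^{1/2}(1/\delta)\log^{1/4}(1/\beta)/\epsilon)$ up to the suppressed $1/\gamma$, and reading the left-hand ``$-O$'' as a typo for $-C$ is the only sensible interpretation.

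Two cautions. First, the step you defer --- concentration of the harmonic-mean/quantile aggregate --- is the entire analytic content of the lemma, and the variance bound $O(\gamma\vert\mathcal{X}\vert^{2})$ you posit for the raw estimator is not available: for a single sketch, $\Pr\left[(1+\gamma)^{\alpha}\geq \lambda\vert\mathcal{X}\vert\right]\approx 1/\lambda$, i.e.\ the raw estimate $(1+\gamma)^{\alpha}$ is heavy-tailed with infinite variance (indeed infinite mean), which is precisely why the aggregation must be a quantile or a harmonic mean in the exponent and why a naive Chebyshev or median-of-means argument on the raw estimates does not go through; the $1/\gamma^{2}$ in $t$ comes from the $\Theta(1)$ spread of the exponent $\alpha$ measured in $\log_{1+\gamma}$ units, not from a second moment of $(1+\gamma)^{\alpha}$. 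Second, it is cleaner (and avoids the union bound over runs and the attendant $\log t$ factor you worry about) not to treat $\alpha_{p}$ as a separate event that ``sometimes exceeds'' $\alpha_{\mathcal{X}}$: the max of the data geometrics and the phantom geometrics is itself an FM sketch of $\vert\mathcal{X}\vert+k_p$ elements, so the multiplicative guarantee applies once to the augmented cardinality and the phantom subtraction leaves an additive error $O(\gamma k_p+k_p)=O(k_p)$. Neither issue changes the conclusion, but as written your proposal is an outline whose hardest step is asserted rather than proved.
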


\begin{proof}
We first bound each cardinality $\hat{T}_M[\mathbf{v}]$ estimated by FM sketch. As shown in the Algorithm~\ref{alg:fm_CarEst}, we compute each $\hat{T}_M[\mathbf{v}]$ using the inclusion-exclusion principle and the megeable property of sketch, that is $|A\cap B| =  \hat{n}- |\bar{A}\cup \bar{B}|$, where $\hat{n}$ denotes the noisy data number sanitized by adding a Laplacian noise $\hat{N}$. Combining with lemma~\ref{lem:uti-dpfm}, we can derive that
\begin{align}
\label{equ:fm-error-com} & \hat{n} - ( n-T_M[\mathbf{v}])\cdot(1+\gamma)-C \leq \hat{T}_M[\mathbf{v}] \leq   \hat{n} - \frac{n-T_M[\mathbf{v}]}{1+\gamma}+C\\
\label{equ:fm-error-com1} &-\gamma n+(1+\gamma)T_M[\mathbf{v}]+\hat{N}-C \leq \hat{T}_M[\mathbf{v}] \leq \frac{\gamma}{1+\gamma} n + \frac{T_M[\mathbf{v}]}{1+\gamma}+\hat{N}+C
\end{align}

By subtracting $T_M[\mathbf{v}]$ for both sides of Equation~\ref{equ:fm-error-com1}, we can obtain that:
\begin{footnotesize}
\begin{align}
-\gamma n+\gamma T_M[\mathbf{v}]+\hat{N}-C \leq \hat{T}_M[\mathbf{v}] - T_M[\mathbf{v}] \leq \frac{\gamma}{1+\gamma} n - \frac{\gamma T_M[\mathbf{v}]}{1+\gamma}+\hat{N}+C
\end{align}
\end{footnotesize}
 By taking the absolute value for both sides and dividing them by $T_M[\mathbf{v}]$, we can derive that:
 \begin{align}
\frac{\vert\hat{T}_M[\mathbf{v}] - T_M[\mathbf{v}]\vert}{T_M[\mathbf{v}]}  &\leq \max \{\frac{\gamma}{1+\gamma} (\frac{n}{T_M[\mathbf{v}]}-1) + \frac{\hat{N}+C}{T_M[\mathbf{v}]}, \\
 &\gamma (\frac{n}{T_M[\mathbf{v}]}-1) -\frac{\hat{N}-C}{T_M[\mathbf{v}]}\}\\
 &\leq \gamma (\frac{n}{T_M[\mathbf{v}]}-1) +\frac{\hat{N}+C}{T_M[\mathbf{v}]}
\end{align}
According to Lemma~\ref{lem:uti-dpfm}, the above bound holds with probability $1-\beta$, and $C = O(\frac{\log^{1/2}(1/\delta)\log^{1/4}(1/\beta)}{\epsilon})$.
\end{proof}

\end{document}